\documentclass[12pt, draftclsnofoot, onecolumn]{IEEEtran}

\usepackage{enumitem}

\usepackage[most]{tcolorbox}
\tcbset{on line, 
        boxsep=4pt, left=0pt,right=0pt,top=0pt,bottom=0pt,
        colframe=white,colback=gray!10,  
        highlight math style={enhanced}
        }

\usepackage{amsbsy}
\usepackage{bm}
\usepackage{fixmath}
\usepackage{silence}
\usepackage{pst-node,pst-plot}
\usetikzlibrary{shapes.geometric}
\usepackage{mathdots}
\usetikzlibrary{math}

\usepackage{upgreek}

\usepackage{autobreak}
\allowdisplaybreaks

\usepackage[caption=false]{subfig}
\captionsetup{font=small}
\usepackage{commath,amsmath,amssymb,amsfonts}
\usepackage{mathtools}
\usepackage{amsthm}
\usepackage{pgfplots} 
\pgfplotsset{compat=1.15}
\usepackage{graphicx}
\usepackage{pgfgantt}
\usepackage{pdflscape}
\usepackage{pst-plot}
\usepackage{xfrac}
\usepackage{colortbl}
\usepackage{cancel}
\usepgfplotslibrary{fillbetween}
\usepackage{amssymb,bm}
\usepackage{float}
\usepackage{amsmath}
\usepackage[draft=false]{hyperref}
\usepackage{multirow}
\usepackage{xcolor}
\usepackage{mathrsfs}
\usepackage{bbm}

\newcommand{\floor}[1]{\left \lfloor #1 \right \rfloor}
\newcommand{\ceil}[1]{\left \lceil #1 \right \rceil}

\usepackage{cite} 

\usepackage{comment}

\usepackage{autobreak}
\allowdisplaybreaks

\def \C{\mathcal{C}}
\def \D{\mathcal{D}}

\def \K{\mathcal{K}}

\def \M{\mathcal{M}}

\def \P{\mathcal{P}}


\def \bB {\mathcal{B}}

\def \fc{\mathbf{c}}

\def \fy{\mathbf{y}}

\def \f0{\mathbf{0}}

\definecolor{blau_1a}{RGB}{93,133,195}
\definecolor{blau_2a}{RGB}{0,156,218}
\definecolor{gruen_3a}{RGB}{80,182,149}
\definecolor{gruen_4a}{RGB}{175,204,80}
\definecolor{gruen_5a}{RGB}{221,223,72}
\definecolor{orange_6a}{RGB}{255,224,92}
\definecolor{orange_7a}{RGB}{248,186,60}
\definecolor{rot_8a}{RGB}{238,122,52}
\definecolor{rot_9a}{RGB}{233,80,62}
\definecolor{lila_10a}{RGB}{201,48,142}
\definecolor{lila_11a}{RGB}{128,69,151}

\definecolor{blau_1b}{RGB}{0,90,169}
\definecolor{blau_2b}{RGB}{0,131,204}
\definecolor{gruen_3b}{RGB}{0,157,129}
\definecolor{gruen_4b}{RGB}{153,192,0}
\definecolor{gruen_5b}{RGB}{201,212,0}
\definecolor{orange_6b}{RGB}{253,202,0}
\definecolor{orange_7b}{RGB}{245,163,0}
\definecolor{rot_8b}{RGB}{236,101,0}
\definecolor{rot_9b}{RGB}{230,0,26}
\definecolor{lila_10b}{RGB}{166,0,132}
\definecolor{lila_11b}{RGB}{114,16,133}

\definecolor{mycolor1}{rgb}{0.0, 0.18, 0.39}
\definecolor{mycolor2}{RGB}{87,108,67}
\definecolor{mycolor3}{RGB}{8,133,161}
\definecolor{mycolor4}{RGB}{80,91,161}
\definecolor{mycolor5}{RGB}{98,122,157}
\definecolor{mycolor6}{RGB}{255,163,67}
\definecolor{mycolor7}{RGB}{152,205,225}
\definecolor{mycolor8}{RGB}{242,204,48}
\definecolor{mycolor9}{rgb}{0,.5,0}
\definecolor{mycolor10}{rgb}{.59,.44,.09}
\definecolor{mycolor11}{RGB}{231,199,31} 
\definecolor{mycolor12}{RGB}{8,133,161} 
\definecolor{mycolor13}{RGB}{157,188,64} 
\definecolor{mycolor14}{RGB}{194,150,130} 
\definecolor{mycolor15}{RGB}{98,122,157} 
\definecolor{mycolor16}{RGB}{160,160,160} 
\definecolor{mycolor17}{RGB}{115,82,68} 
\definecolor{mycolor18}{RGB}{94,60,108} 
\definecolor{mycolor19}{RGB}{115,82,68} 
\definecolor{mycolor20}{RGB}{255,183,30} 

\usepackage{accents}

\theoremstyle{remark} \newtheorem{theorem}{Theorem}
\theoremstyle{remark} \newtheorem{lemma}[theorem]{Lemma}
\theoremstyle{remark} 
\theoremstyle{remark} 
\theoremstyle{remark} \newtheorem{definition}{Definition}
\theoremstyle{remark} 
\theoremstyle{remark} \newtheorem{example}{Example}

\usepackage{amssymb}

\usepackage{tikz}
\usetikzlibrary{arrows.meta}

\usepackage{pst-node,pst-plot}

\usetikzlibrary{shapes.geometric}

\usetikzlibrary{arrows, arrows.meta}
\usepackage{pgfplots}
\pgfplotsset{compat=1.15}
\usepgfplotslibrary{polar}

\usepackage[bottom]{footmisc}

\usepackage{cleveref}
\crefname{equation}{Eq}{}
\crefrangelabelformat{equation}{(#3#1#4--#5#2#6)}

\pagestyle{empty}

\usepackage{algpseudocode}
\usepackage[linesnumbered,ruled,noline]{algorithm2e}
\SetKwInput{KwInput}{Input}
\SetKwInput{KwOutput}{Output}
\SetKwBlock{Loop}{Loop}{end}

\usepackage{diffcoeff}
\usepackage{fullpage,amsmath,tkz-base}  
\usepackage{cases}

\usepackage{booktabs}

\usepackage{wrapfig}
\usepackage{upgreek}

\usepackage{siunitx}
\usepackage{framed}

\usepackage{common}

\normalsize
\ifCLASSINFOpdf
\else
\fi

\begin{document}
\title{Identification Codes via Prime Numbers
}
\author{
	\vspace{0.3cm}
    \fontsize{11}{11} \selectfont \IEEEauthorblockN{Emad Zinoghli\IEEEauthorrefmark{1} and Mohammad Javad Salariseddigh\IEEEauthorrefmark{2}
    }
    \\
	\vspace{0.45cm}
    \fontsize{11}{11} \selectfont \IEEEauthorblockA{\IEEEauthorrefmark{1} Department of Electrical Engineering, Sharif University of Technology
    \\
    \fontsize{11}{11} \selectfont \IEEEauthorrefmark{2} Institute for Communications Engineering, Technical University of Munich
    }
    \\[.8em]
    Emails: emad.zinoghli@sharif.edu, mjss@tum.de
}

\maketitle

\begin{abstract}
We introduce a method for construction of identification codes based on prime number generation over the noiseless channels. The earliest method for such construction based on prime numbers was proposed by Ahlswede which relies on algorithms for generation of prime numbers. This method requires knowledge of $2^n$ first prime numbers for identification codes with block length $n,$ which is not computationally efficient. In this work, we revisit Ahlswede's scheme and propose a number of modifications. In particular, employing probabilistic prime generation algorithm, we guarantee that the prime keys generation is possible in polynomial time. Furthermore, additional improvements in terms of type II upper bound are derived and presented. Finally, we propose a method for identification coding based on hash functions which generalizes the Ahlswede's scheme.
\end{abstract}

\section{Introduction}
In the identification problem discussed by \cite{AD89}, effective communication strategies allow the receiver to accurately ascertain whether a specific message, relevant to a particular task, has been transmitted by the sender or not. This differs from Shannon's message transmission problem \cite{S48}, where the decoder aims to recover the original sent message. This problem has received growing interest in the investigation of different applications within the domain of post-Shannon and semantic goal-oriented communications, see \cite{Salariseddigh23_BSC_Future_Internet,Salariseddigh_PhD_Diss} for extensive discussions.

In a randomized identification (RI) identification \cite{AD89}, the encoder utilizes randomness to generate codewords for messages. Employing randomness in the encoding procedure enables the decoder to identify double exponential number of messages, i.e., $\sim 2^{2^{nR}},$ where $n$ is the block length and $R$ indicates the coding rate. This is a remarkable improvement compared to the conventional message transmission problem of Shannon \cite{S48} or deterministic identification over the binary symmetric channel (BSC) \cite{Salariseddigh23_BSC_GC23,Salariseddigh23_BSC_Future_Internet} which are capable of reliable decoding for only exponential number of messages.

Construction of RI identification codes based on Reed-Solomon codes is studied in \cite{Verdu89}. One of the first construction of an identification code was proposed by Ahlswede and Verboven in \cite{Ahlswede91} titled the 3-step scheme. Based on this scheme, two random prime numbers that are distributed uniformly in a given interval, are generated. Then, these prime numbers are used to encode the message. Therein, it is shown that such encoding method achieves the channel capacity for a noiseless binary channels. In this work, we revisit this method and show that utilizing probabilistic prime number generating algorithm can be beneficial in terms of computational complexity. We consider identification systems over the binary symmetric channel that are interested to accomplish the identification task, namely, to verify whether or not a particular message has been sent at the transmitter. In particular, we make the following contributions:
\begin{itemize}
\item \textbf{\textcolor{mycolor12}{Time Complexity:}}
To generate uniform primes on the interval $[1:p_K]$ where $p_n$ is the $n_{\,\text{th}}$ prime number, Ahlswede and Verboven \cite{Ahlswede91} pick a random index $k \gets \{1, \ldots, K\}$ and then calculate $p_k.$ Although this method minimizes the number of random bits and generates exactly uniform primes, it is computationally expensive. Currently, the best algorithms compute the first $K$ primes in sub-exponential time in the size of the input, the number of bits of $K.$ For example, the Atkin's sieve \cite{Atkin04} runs in $\mathcal{O}(K / \log \log K).$ Even if we restrict ourselves to only computing $p_k,$ the best algorithms are still sub-exponential. In \cite[Chapter 9.9]{bach}, it is shown that $p_k$ can be computed efficiently given an oracle that compute $\pi(x)$ and vice versa. The current best algorithm for computing $\pi(x)$ runs in $\mathcal{O}(x^{1/2 + \varepsilon})$ \cite{lagarias} for any $\varepsilon > 0$ which is still sub-exponential in the number of bits of $x.$ Our main contribution is a way of improving the time complexity of the 3-step scheme. We consider polynomial time algorithms that generates \textit{almost uniform} primes with the least number of random bits possible. We employ the Miller-Rabin test to generate such prime numbers.

\item \textbf{\textcolor{mycolor12}{Error Bound:}}
In our modified identification scheme we propose sending the prime numbers as identifier instead of their index. This eliminates the need for a sub-exponential algorithm to calculate the index of a prime in the transmitter and another sub-exponential algorithm to calculate prime numbers from their index in the receiver. Thus, design of transmitter and receiver systems may be made more efficient and simpler. Although this technique increases the size of block length, we show that we are still able to identify doubly exponential number of messages by providing tighter upper bounds for the type II error probability.

\item \textbf{\textcolor{mycolor12}{Generalized Coding Scheme:}}
We establish a connection between the RI codes and universal hash functions, through which a systematic construction for RI codes is enabled. In this method we exploit both the Shannon transmission codes and universal hash functions to construct RI codes.
\end{itemize}

\subsection{Notations}
We use the following notations throughout this paper:
Blackboard bold letters $\mathbbmss{K,X,Y,Z},\ldots$ are used for alphabet sets. Lower case letters $x,y,z,\ldots$ stand for constants and values of random variables, and upper case letters $X,Y,Z,\ldots$ stand for random variables. The set of consecutive natural numbers from $1$ through $M$ is denoted by $[\![M]\!].$ Suppose \(\mathcal{D}\) is a distribution over the set \(S\), by \(x \gets \mathcal{D}\) we mean that \(x\) is chosen from the set \(S\) according to the distribution \(\mathcal{D}\). When \(x\) is chosen uniformly we simply denote it as \(x \gets S\). The notation $f(n) = \mathcal{O}(g(n))$ is used to indicate that function $f(n)$ is asymptotically dominated by function $g(n).$ We denote \(f \sim g\) when \(f\) and \(g\) are asymptotically the same, that is $\lim_{x\to \infty} f(x)/g(x) = 1.$ The notation $\circ$ stands for composition of functions. Throughout this report we donate base \(2\) logarithm by \(\log\) and natural logarithm by \(\ln\).

\subsection{Organization}
Section~\ref{Sec.SysModel} includes required preliminaries on the identification codes, and system model. Section~\ref{Sec.Prelim_Num_Theory} provides fundamental number theoretic concepts and results. In Section~\ref{Sec.PNG}, we provide probabilistic algorithm for prime generation. Section~\ref{Sec.3Step–ID} review the previous 3–step identification scheme in details. Section~\ref{Sec.3Step–ID–Modified} presents our main contribution, i.e., an improved 3-step coding scheme. Section~\ref{Sec.3Step–ID–Generalized} includes a generalization of the previous encoding scheme. Then, in Section~\ref{Sec.Simulations} we provide numerical experiments and relevant analysis. Finally, we conclude this work in Section~\ref{Sec.Conclusions} with summary, discussion, and future research directions.

\section{System Model and Preliminaries}
\label{Sec.SysModel}
In this section, we present the adopted system model and establish some preliminaries regarding identification coding for the BSC.

\subsection{System Model}
We address an identification-focused communication setup, where the decoder's purpose is accomplishing the following task: Determining whether or not a target message has been sent by the transmitter.

\subsection{Identification Coding for the Noiseless Binary Symmetric Channel}
The definition of an RI code for the BSC is given below.
\begin{definition}[Identification Code for Noiseless Channels]
\label{Def.BSC-DI-Code}
An $(n,\allowbreak M(n,R), \allowbreak 0, \allowbreak \lambda_2)$ identification code for a BSC $\bB$ with channel matrix $W$ for integer $M(n,R)$ where $n$ and $R$ are the block length and coding rate, respectively, is defined as a system $(\C,\D),$ which consists of a codebook $\C = \{ \mathbf{c}_i \},$ for every $i \in [\![M]\!]$ with $\fc_i = (c_{i,t})|_{t=1}^n \subset \{0,1\}^n,$ and a collection of decoders $$\D = \bigcup_{j \in [\![M]\!]} \D_j \,,$$ where $\D_j \subset \{0,1\}^n$ is the decoding set corresponding to the single message $\fc_j.$
Given a message $i \in [\![M]\!],$ the encoder transmits codeword $\mathbf{c}_i,$ and the decoder's asks: Was a target message $j \in [\![M]\!]$ sent or not? There exist two errors that may happen:
\begin{align*}
    P_{e,1}(i) = 1 -\sum_{\fy \in \D_i} W^n \big( \fy \, | \, \fc^i \big) \qquad \text{ and } \qquad
    P_{e,2}(i,j) = \sum_{\fy \in \D_j} W^n \big( \fy \, | \, \fc^i \big) .
\end{align*}
It must hold $P_{e,1}(i) = 0$ and $P_{e,2}(i,j) \leq \lambda_2, \forall \, i,j \in [\![M]\!]$ with $i \neq j, \, \forall \lambda_2 > 0.$ A rate $R > 0$ is achievable if $\forall \lambda_2 > 0$ and sufficiently large $n,$ there exists an identification code meeting the error conditions. The operational identification capacity for $\bB,$ denoted by $\mathbb{C}_{\rm I}(\bB),$ is supremum of all achievable rates.
\end{definition}

\section{Preliminaries on Number Theory and Hash Functions}
\label{Sec.Prelim_Num_Theory}
In this section, we establish some elementary definitions and theorems on prime numbers and their asymptotic distribution amongst positive integers. These preliminaries support our understanding of 3-step algorithm which exploits the prime number generation for code construction.
\begin{definition}
    \label{Def.Prime-Num}
    A prime number (prime integer or a \emph{prime} for short) is a positive integer $p > 1$ that has no positive integer divisors other than $1$ and $p$ itself. In other word, a prime number $p$ is a positive integer which has exactly one positive divisor except than $1 \,,$ which implies that a prime $p$ is a number that cannot be factored.
\end{definition}
\begin{definition}
    \label{Def.Prime-Counting-Func}
    Assume that $x > 0$ is a real number. Then, the prime-counting function $\pi(x),$ is a function that counts the number of primes not exceeding $x.$ That is,
    \begin{align}
        \label{Eq.Prime-Counting-Func}
        \pi(x) = \left| \left\{ p \leq x \;:\; p \text{ is a prime } \right\} \right|
        \,.
    \end{align}
\end{definition}
    Observe that by Euclid's Theorem, there are infinitely many prime numbers amongst the positive integers. Therefore, we immediately conclude that the function $\pi(x)$ provided in \ref{Eq.Prime-Counting-Func} diverges to infinity as $x$ tends to infinity, i.e., $\pi(x) \to \infty$ as $x \to \infty \,.$ The behaviour (growth rate) of the function $\pi(x)$ has been subject to immense studies by numerous mathematicians. In particular, there has been several observations which support the intuitive idea that the prime numbers become less common as they become larger. In order to formalize such an idea by quantifying precisely the rate at which this phenomenon (becoming less common) occurs, a fundamental theorem called \emph{Prime Number Theorem (PNT)} is stated which describes the asymptotic distribution of the prime numbers among the positive integers.

\begin{definition}
    Let $p_n$ be the $n$-th prime number, then analog to the standard factorial for prime numbers, the primorial $p_n \#$ is defined as the product of the first $n$ primes, i.e.,
    \begin{align}
        p_n\# = \prod_{k=1}^n p_k
        \,.
    \end{align}
\end{definition}
\begin{theorem}[{{\cite[Ch.~4]{Apostol76}}}]
\label{Th.PNT}
Let $\pi(x)$ denote the prime-counting function for real $x > 0 \,.$ Then,
\begin{align}
    \lim_{x \to \infty} \frac{\pi(x)}{x / \ln x} = 1
    \;.
\end{align}
\end{theorem}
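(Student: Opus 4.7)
The plan is to follow the classical analytic approach of Hadamard and de la Vall\'ee Poussin, reducing the asymptotic to a statement about a Chebyshev-type function and then exploiting analytic properties of the Riemann zeta function. First, I would introduce Chebyshev's functions $\theta(x) = \sum_{p \leq x} \ln p$ and $\psi(x) = \sum_{p^k \leq x} \ln p$, and verify the equivalence of the three asymptotics $\pi(x) \sim x/\ln x$, $\theta(x) \sim x$, and $\psi(x) \sim x$. The equivalence of $\pi$ and $\theta$ follows by Abel (partial) summation, namely $\pi(x) = \theta(x)/\ln x + \int_2^x \theta(t)/(t \ln^2 t)\, dt$, while $\psi(x) - \theta(x) = \mathcal{O}(\sqrt{x}\,\ln^2 x)$ since higher prime powers contribute a negligible amount. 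Thus it suffices to prove $\psi(x) \sim x$.

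Next, I would link $\psi$ to the Riemann zeta function $\zeta(s) = \sum_{n \geq 1} n^{-s}$. Taking the logarithmic derivative of the Euler product $\zeta(s) = \prod_p (1 - p^{-s})^{-1}$ yields, for $\operatorname{Re}(s) > 1$, the Dirichlet series identity $-\zeta'(s)/\zeta(s) = \sum_{n \geq 1} \Lambda(n) n^{-s}$, where $\Lambda$ is the von Mangoldt function and $\psi(x) = \sum_{n \leq x} \Lambda(n)$. Applying the Wiener--Ikehara Tauberian theorem (or alternatively a Perron contour integration along $\operatorname{Re}(s) = 1 + \varepsilon$ pushed to $\operatorname{Re}(s) = 1$) to the function $-\zeta'(s)/\zeta(s) - 1/(s-1)$ then delivers $\psi(x) \sim x$, provided one knows that this function extends continuously (with no extra poles) to the closed half-plane $\operatorname{Re}(s) \geq 1$.

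The main obstacle, and the analytic heart of the theorem, is proving that $\zeta(s) \neq 0$ on the critical line $\operatorname{Re}(s) = 1$. The classical device is the elementary trigonometric inequality $3 + 4\cos\phi + \cos(2\phi) \geq 0$, applied via $\ln \bigl| \zeta(\sigma)^3 \zeta(\sigma + it)^4 \zeta(\sigma + 2it) \bigr| \geq 0$ for $\sigma > 1$; together with the simple pole of $\zeta$ at $s = 1$ and a standard continuity argument as $\sigma \downarrow 1$, this rules out any hypothetical zero of $\zeta$ with real part $1$, and thereby removes the last obstruction to the Tauberian input.

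Assembling the pieces, the nonvanishing of $\zeta$ on $\operatorname{Re}(s) = 1$ licenses the Tauberian step and yields $\psi(x) \sim x$; the reductions in the first paragraph then translate this into $\pi(x) \sim x/\ln x$, which is exactly the claim. An elementary but considerably longer alternative would be the Erd\H{o}s--Selberg route, based on Selberg's symmetry formula $\sum_{p \leq x} \ln^2 p + \sum_{pq \leq x} \ln p \ln q = 2 x \ln x + \mathcal{O}(x)$, which bypasses complex analysis entirely at the cost of significantly more combinatorial manipulation; since only the asymptotic statement is needed downstream in our construction, the analytic route above is the most economical path.
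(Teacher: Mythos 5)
Your outline is a correct sketch of the classical analytic proof, but note that the paper does not actually prove this theorem: it is stated as an imported result and the ``proof'' is simply a pointer to \cite[Ch.~4]{Apostol76}, so there is no in-paper argument to compare against. Your route (reduce $\pi(x)\sim x/\ln x$ to $\psi(x)\sim x$ via Abel summation and the bound $\psi(x)-\theta(x)=\mathcal{O}(\sqrt{x}\,\ln^2 x)$, then obtain $\psi(x)\sim x$ from $-\zeta'(s)/\zeta(s)=\sum_n \Lambda(n)n^{-s}$, a Tauberian theorem, and the nonvanishing of $\zeta$ on $\operatorname{Re}(s)=1$ via $3+4\cos\phi+\cos 2\phi\ge 0$) is the standard Hadamard--de la Vall\'ee Poussin argument, and it is essentially what the cited reference does: Apostol's Chapter~4 contains exactly your first-paragraph equivalences and the Chebyshev-type bounds, while the analytic completion lives in his Chapter~13. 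Two caveats: your write-up is an outline rather than a proof --- the two genuinely hard ingredients (the Wiener--Ikehara theorem and the continuous extension of $-\zeta'(s)/\zeta(s)-1/(s-1)$ to the closed half-plane) are invoked by name, not established --- and the line $\operatorname{Re}(s)=1$ is not the ``critical line,'' which conventionally refers to $\operatorname{Re}(s)=\tfrac12$. For the purposes of this paper, where only the asymptotic statement (and its non-asymptotic companions, Theorems~\ref{Th.Prime-Counting-Func-Bounds} and~\ref{Th.n-th-Prime-Bounds}) is consumed downstream, deferring to the reference is the intended treatment.
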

\begin{proof}
    The proof is provided in \cite[Ch.~4]{Apostol76}.
\end{proof}

Theorem~\ref{Th.PNT} is referred to as the PNT and is equivalent to the statement that the $n$-th prime number $p_n$ satisfies $p_n \sim n \log n.$ Furthermore, PNT suggests that the density of primes in natural number $\mathbb{N}$ is zero, i.e.,
\begin{align}
    \lim_{n \to \infty} \frac{\pi(n)}{n} = \frac{1}{\ln n} = 0 .
\end{align}

In the following, we introduce a theorem which connects the PNT and the asymptotic value of the $n$-th prime number.
\begin{theorem}[{{\cite[Th.~4.5]{Apostol76}}}]
    \label{Th.n-th-Prime}
    Let $p_n$ indicate the $n$-th prime number. Then the following asymptotic relations are equivalent
    \begin{align}
        \lim_{x \to \infty} \frac{\pi(x)}{x / \ln x} = 1 \,, \quad \lim_{x \to \infty} \frac{\pi(x)}{x / \ln \pi(x)} = 1 \,, \quad \lim_{n \to \infty} \frac{p_n}{n \ln n} = 1 .
    \end{align}
\end{theorem}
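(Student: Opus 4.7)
The plan is to prove the three relations are equivalent by the short cycle (1) $\Rightarrow$ (2) $\Rightarrow$ (3) $\Rightarrow$ (1), treating PNT itself (the first relation, established as Theorem~\ref{Th.PNT}) as the only analytic input. Everything else is formal manipulation using logarithms and the defining identity $\pi(p_n) = n$.

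For (1) $\Rightarrow$ (2), I would take logarithms of $\pi(x) \sim x/\ln x$ to extract $\ln \pi(x) = \ln x - \ln \ln x + o(1)$, which immediately yields $\ln \pi(x)/\ln x \to 1$. Multiplying (1) by this ratio gives (2) without further work. For (2) $\Rightarrow$ (3), I would evaluate (2) along the subsequence $x = p_n$. Since $\pi(p_n) = n$ by the definition of the prime-counting function, the ratio in (2) becomes $n \ln n / p_n$, and its convergence to $1$ is exactly the statement $p_n \sim n \ln n$.

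The step I expect to require the most care is (3) $\Rightarrow$ (1), because the hypothesis lives on the discrete sequence $\{p_n\}$ while the conclusion is a limit in a continuous variable $x$. The bridge is a sandwich: for $x \geq 2$ let $n = \pi(x)$, so that $p_n \leq x < p_{n+1}$. Applying (3) to both endpoints, together with the easy observation $(n+1)\ln(n+1) \sim n \ln n$, yields $x \sim n \ln n = \pi(x) \ln \pi(x)$. One more logarithm gives $\ln x \sim \ln \pi(x)$, and dividing these two equivalences recovers $\pi(x) \sim x/\ln x$, closing the cycle. The only subtle point is controlling the $o(1)$ terms uniformly across the sandwich, but this follows directly from monotonicity of $\pi$ and the fact that consecutive primes are asymptotically close on the $n \ln n$ scale.
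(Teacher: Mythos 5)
Your proof is correct, and it is essentially the argument the paper relies on: the paper gives no proof of its own beyond citing Apostol, and your cycle $(1)\Rightarrow(2)\Rightarrow(3)\Rightarrow(1)$ --- taking logarithms to get $\ln\pi(x)\sim\ln x$, substituting $x=p_n$ with $\pi(p_n)=n$, and sandwiching $p_n\le x<p_{n+1}$ to return from the discrete sequence to the continuous variable --- is exactly the standard proof of Apostol's Theorem~4.5. The one step you flag as delicate, uniform control of the $o(1)$ terms in the sandwich, is indeed harmless since $n=\pi(x)\to\infty$ as $x\to\infty$, so no gap remains.
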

\begin{proof}
    The proof is provided in \cite[Ch.~4]{Apostol76}.
\end{proof}
\subsection{Non-Asymptotic Bounds for the Prime-Counting Function and the $n$-th Prime Number}
Theorem~\ref{Th.PNT} and Theorem~\ref{Th.n-th-Prime} provide asymptotic behaviour for the prime-counting function, $\pi(x)$ and the $n$-th prime number, $p_n.$ In particular, the PNT is an asymptotic result which gives an \emph{ineffective} bound on the prime-counting function, $\pi(x) \,,$ as a direct consequence of the definition of the limit, i.e., $\forall \varepsilon > 0 \,,\;
\exists N \in \mathbb{N}$ such that $\forall n \geq N \,,$ we have
\begin{align}
   (1 - \varepsilon) \frac{x}{\ln x} < \pi(x) < (1 + \varepsilon) \frac{x}{\ln x}
   \,.
\end{align}
However, in some analysis, we may require exact analytic lower and upper bounds on $\pi(x)$ and $p_n \,.$ Therefore, we introduce the following theorems which establish exact bounds on such functions.
\begin{theorem}[{{\cite[Th.~4.6]{Apostol76}}}]
\label{Th.Prime-Counting-Func-Bounds}
Assume that $n \geq 2$ is a positive integer and let $\pi(n)$ denote the prime-counting function associated to $n \,.$ Then, $\pi(x)$ is bounded by
\begin{align}
    \frac{1}{6} \Big( \dfrac{n}{\ln n} \Big) < \pi(n) < 6 \Big( \frac{n}{\ln n} \Big)
    \,.
\end{align}
\end{theorem}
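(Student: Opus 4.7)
The plan is to prove the two Chebyshev-type inequalities separately using the classical binomial-coefficient method, since these bounds are effective (non-asymptotic) versions of the PNT and $\pi(n)/(n/\ln n)$ is already of order $1$, so a fairly crude analysis should comfortably produce constants inside $[1/6, 6]$.

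For the upper bound, I would first work with the Chebyshev theta function $\theta(x) = \sum_{p \leq x} \ln p$. The key observation is that every prime $p$ with $n < p \leq 2n$ divides the central binomial coefficient $\binom{2n}{n}$, and hence $\prod_{n < p \leq 2n} p$ divides $\binom{2n}{n} \leq 4^n$. Taking logarithms gives $\theta(2n) - \theta(n) \leq n \ln 4$. Iterating this dyadic inequality (telescoping $\theta(2^{k+1}) - \theta(2^k)$) yields a linear bound $\theta(n) \leq C n$ for an explicit absolute constant $C$. To pass from $\theta$ to $\pi$, I would split the sum at $\sqrt{n}$: since $\theta(n) \geq \sum_{\sqrt{n} < p \leq n} \ln p \geq (\pi(n) - \pi(\sqrt{n})) \cdot \tfrac{1}{2}\ln n$, rearranging and absorbing $\pi(\sqrt{n}) \leq \sqrt{n}$ into the constant gives an upper bound of the form $\pi(n) \leq C' \cdot n/\ln n$, which I would then verify is at most $6n/\ln n$ for all $n \geq 2$ (treating a small initial range by direct computation if necessary).

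For the lower bound, I would exploit the opposite direction of the same binomial coefficient. By Legendre's formula, each prime $p \leq 2n$ appears in $\binom{2n}{n}$ with exponent $v_p \leq \lfloor \log_p(2n) \rfloor$, so $\binom{2n}{n} \leq (2n)^{\pi(2n)}$. Combined with the standard lower bound $\binom{2n}{n} \geq 4^n/(2n+1)$ obtained by noting $\binom{2n}{n}$ is the largest among the $2n+1$ terms of the binomial expansion of $(1+1)^{2n}$, taking logarithms produces $\pi(2n) \ln(2n) \geq 2n \ln 2 - \ln(2n+1)$, i.e., $\pi(2n) \gtrsim n/\ln(2n)$. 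Rewriting in terms of a single variable $m = 2n$ and adjusting constants gives $\pi(m) \geq m/(c \ln m)$ for an explicit $c$; a routine check verifies that $c \leq 6$ suffices uniformly for $m \geq 2$ (again, small cases are handled by hand).

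The main obstacle will be bookkeeping the constants so that the final inequalities hold for all $n \geq 2$ rather than only asymptotically: the binomial argument yields constants close to $\ln 2$ and $2 \ln 2$, but small values of $n$ must be checked to ensure the loose factors of $6$ and $1/6$ are respected. I would therefore finish by tabulating $\pi(n)/(n/\ln n)$ for $n$ in a small initial window (say up to a few hundred) to cover the region where the asymptotic arguments are weakest, and invoke the asymptotic Chebyshev estimates derived above to conclude on the tail. This reduces the entire argument to the two binomial-coefficient inequalities plus a finite verification, which is essentially the path taken in \cite[Ch.~4]{Apostol76}.
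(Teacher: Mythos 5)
Your proposal is correct and follows essentially the same route as the paper, which offers no argument of its own beyond deferring to \cite[Ch.~4]{Apostol76}; the binomial-coefficient (Chebyshev) method you outline, with the upper bound via $\theta(2n)-\theta(n)\leq n\ln 4$ and the lower bound via $\binom{2n}{n}\leq (2n)^{\pi(2n)}$, is precisely the argument given in that reference. The constants work out comfortably ($8\ln 2 \approx 5.55 < 6$ for the upper bound and $\ln 2 > 1/6$ for the lower bound), so your planned finite check of small $n$ closes the remaining gap.
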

\begin{proof}
    The proof is provided in \cite[P.~82, Ch.~4]{Apostol76}.
\end{proof}
\begin{theorem}[{{\cite[Th.~4.7]{Apostol76}}}]
\label{Th.n-th-Prime-Bounds}
Assume that $n \geq 1$ is a positive integer and let $p_n$ denote the $n$-th prime number. Then, $p_n$ is bounded by
\begin{align}
    \frac{1}{6} n \ln n \leq p_n \leq 12 \big( n \ln n + n \ln ( 12 / e ) \big)
    \,.
\end{align}
\end{theorem}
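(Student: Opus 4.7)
The plan is to derive both inequalities from the prime-counting bounds in Theorem~\ref{Th.Prime-Counting-Func-Bounds} together with the defining identity $\pi(p_n) = n$, which holds since $p_n$ is by construction the $n$-th prime.

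For the lower bound, I would substitute $x = p_n$ into the upper half of Theorem~\ref{Th.Prime-Counting-Func-Bounds}, obtaining $n = \pi(p_n) < 6\, p_n / \ln p_n$, i.e.\ $p_n > \tfrac{1}{6}\, n \ln p_n$. Combining this with the trivial inequality $p_n \geq n$ (so that $\ln p_n \geq \ln n$ for $n \geq 2$) immediately yields $p_n \geq \tfrac{1}{6}\, n \ln n$. The case $n = 1$ is automatic since the right-hand side then reduces to $0$.

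For the upper bound, I would substitute $x = p_n$ into the lower half of Theorem~\ref{Th.Prime-Counting-Func-Bounds} to get $p_n < 6 n \ln p_n$, i.e.\ $g(p_n) < 6 n$ for $g(x) := x / \ln x$. Since $g$ is strictly increasing on $(e, \infty)$, it suffices to exhibit a candidate $p^\star > e$ with $g(p^\star) \geq 6 n$ and identify $p^\star$ with the claimed upper bound. Taking $p^\star := 12\bigl( n \ln n + n \ln(12/e) \bigr) = 12 n \ln(12 n / e)$ and plugging in, the target inequality $g(p^\star) \geq 6 n$ reads $12 n \ln(12 n / e) \geq 6 n \bigl[ \ln(12 n) + \ln \ln(12 n / e) \bigr]$. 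After cancellation and the change of variable $t := \ln(12 n)$, this collapses to the elementary scalar inequality $t - 2 \geq \ln(t - 1)$ for $t \geq \ln 12$.

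The main (and essentially only) obstacle is this scalar inequality; I would verify it by observing that the difference $(t-2) - \ln(t-1)$ vanishes at $t = 2$ and has nonnegative derivative $1 - 1/(t-1)$ on $[2, \infty)$, so the difference is non-decreasing and hence nonnegative on $[\ln 12, \infty) \subset [2, \infty)$. Monotonicity of $g$ then forces $p_n \leq p^\star$, which is the claimed upper bound; any small values of $n$ for which the monotonicity step does not apply (e.g.\ $n = 1$) are handled by direct inspection, since $p^\star \geq 12 \ln(12/e) > 2 = p_1$ there.
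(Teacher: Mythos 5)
Your proposal is correct, and it is more than the paper offers: the paper does not prove this theorem at all, it simply defers to Apostol's Theorem~4.7. Your derivation from Theorem~\ref{Th.Prime-Counting-Func-Bounds} via $\pi(p_n)=n$ is the standard route and each step checks out: the lower bound follows cleanly from $n<6p_n/\ln p_n$ together with $p_n>n$; for the upper bound, the reduction of $g(p^\star)\ge 6n$ with $p^\star=12n\ln(12n/e)$ to the scalar inequality $t-2\ge\ln(t-1)$ for $t=\ln(12n)\ge\ln 12>2$ is algebraically right, and your monotonicity argument for $g(x)=x/\ln x$ on $(e,\infty)$ is sound since $p^\star\ge 12\ln(12/e)>e$ and $p_n\ge 3>e$ for $n\ge 2$, with $n=1$ checked directly. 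The only notable difference from Apostol's own argument is the finishing step: he bootstraps via the inequality $\ln x\le (2/e)\sqrt{x}$ to first get $p_n<144n^2/e^2$ and hence $\ln p_n<2\ln(12n/e)$, then substitutes back into $p_n<6n\ln p_n$; your inversion of the increasing function $x/\ln x$ reaches the same constant by a self-contained calculus verification. Both are valid; yours avoids introducing the auxiliary square-root inequality at the cost of the explicit scalar lemma.
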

\begin{proof}
    The proof is provided in \cite[P.~84, Ch.~4]{Apostol76}.
\end{proof}
In order to establish an upper bound on the type II error probability of the 3-step scheme proposed by Ahlswede in \cite{Ahlswede91}, it is argued that the number of distinct prime factors of a number $n$ is less than $\log n \,.$ We utilize the following theorem which establishes a tighter upper bound for such a function.
\begin{theorem}[{{\cite[Sec.~22.10]{Hardy79}}}]
\label{Th.omega-n-Order}
Let the prime factorization of a natural number $n$ be $n = p_1^{a_1} \ldots p_r^{a_r} \,.$ Further, let $\omega(n) = r$ be the number of distinct prime factors of $n \,.$ Then, for primorial $n,$ i.e., $n = \prod_{k=1}^r p_k = p_r \# \,,$ we obtain 
\begin{align}
	\omega(n) \sim \frac{\ln n}{\ln \ln n}
    \,,
\end{align}
which is equivalent to the following
\begin{align}
	\omega(n) = \frac{\ln n}{\ln \ln n} + o(\frac{\ln n}{\ln \ln n})
    \quad \equiv \quad
    \lim_{n \to \infty} \frac{\omega(n)}{\ln n / \ln \ln n} = 1
    \,.
\end{align}
\end{theorem}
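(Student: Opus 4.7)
The plan is to reduce the primorial case directly to the asymptotic estimate $p_r \sim r \ln r$ furnished by Theorem~\ref{Th.n-th-Prime}, combined with a summation estimate for $\sum_{k=1}^r \ln p_k$. Observe first that for $n = p_r\#$ the equality $\omega(n) = r$ is immediate from the definition of the primorial, so the task reduces to showing that $r \sim \ln n / \ln \ln n$ where $n = p_r\#$.

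The first step would be to write
\begin{align}
\ln n \;=\; \ln \prod_{k=1}^r p_k \;=\; \sum_{k=1}^r \ln p_k.
\end{align}
This sum is precisely Chebyshev's function $\theta(p_r)$, and the relation $\theta(x) \sim x$ is a classical equivalent form of the PNT. Combined with Theorem~\ref{Th.n-th-Prime} this yields $\ln n \sim p_r \sim r \ln r$. If one prefers to avoid invoking the Chebyshev function, the same conclusion can be reached by substituting the pointwise estimate $\ln p_k = \ln k + \ln \ln k + o(1)$ (itself a consequence of $p_k \sim k \ln k$) and then summing using Stirling's formula $\sum_{k=1}^r \ln k = \ln(r!) \sim r \ln r$; either route produces $\ln n \sim r \ln r$.

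The second step is to take logarithms once more. Since
\begin{align}
\ln \ln n \;\sim\; \ln(r \ln r) \;=\; \ln r + \ln \ln r \;\sim\; \ln r,
\end{align}
where the last equivalence uses $\ln \ln r = o(\ln r)$, dividing gives
\begin{align}
\frac{\ln n}{\ln \ln n} \;\sim\; \frac{r \ln r}{\ln r} \;=\; r \;=\; \omega(n),
\end{align}
which is the desired asymptotic and completes the proof.

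The main obstacle is making the step $\sum_{k=1}^r \ln p_k \sim r \ln r$ fully rigorous: the pointwise equivalence $\ln p_k \sim \ln k$ does not by itself survive summation without some uniform control over the error terms for small $k$. The cleanest remedy is to cite the $\theta(x) \sim x$ form of PNT directly; a more hands-on alternative is to split the summation at a threshold, then dominate both pieces using the explicit non-asymptotic bounds of Theorem~\ref{Th.n-th-Prime-Bounds}, thereby controlling the lower-order contributions explicitly. Once this estimate is established, the remaining manipulations are elementary.
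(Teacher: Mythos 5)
Your proof is correct and follows essentially the standard argument from the cited source \cite[Sec.~22.10]{Hardy79}: the paper itself supplies no proof of this theorem, deferring entirely to that reference, so your derivation via $\ln(p_r\#)=\theta(p_r)\sim p_r\sim r\ln r$ and a second logarithm is exactly what the citation points to. Your caveat about summing the pointwise estimate $\ln p_k\sim\ln k$ is well taken, and either remedy you propose works; in particular the explicit bounds of Theorem~\ref{Th.n-th-Prime-Bounds} give $\ln p_k=\ln k+\ln\ln k+O(1)$ uniformly for $k\ge 3$, which sums to $r\ln r+o(r\ln r)$ as required.
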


\begin{theorem}[\cite{Ramanujan17}]
\label{Th.omega-n-Order-UB}
	Let \(\func{\omega}{n}\) denote the number of distinct prime factors of \(n\).  For all \(\varepsilon > 0\) and sufficiently large values of \(n\)
\begin{equation}
	\func{\omega}{n}  \leq \dfrac{\ln n}{\ln \ln n} + \varepsilon .
\end{equation}
\end{theorem}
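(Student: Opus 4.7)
\emph{Proof proposal.} The plan is to reduce the bound for arbitrary $n$ to the primorial case that is already handled by Theorem~\ref{Th.omega-n-Order}. The reduction hinges on a simple extremal observation: among all positive integers with exactly $r$ distinct prime factors, the smallest is the primorial $p_r\# = p_1 p_2 \cdots p_r$, since to minimize a product of prime powers under a prescribed number of distinct primes one must take the $r$ smallest primes, each to the first power. Consequently, whenever $\omega(n) = r$ we have $n \geq p_r\#$, and in particular $\ln n \geq \ln(p_r\#)$.

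Given this, I would proceed as follows. Fix $n$ with $\omega(n) = r$ and set $N = p_r\# \leq n$. If $r$ stays bounded as $n \to \infty$, the claimed inequality is trivial because $\ln n / \ln\ln n \to \infty$. Otherwise $r \to \infty$, and Theorem~\ref{Th.omega-n-Order} applied to the primorial $N$ gives
\[
    r \;=\; \omega(N) \;=\; \frac{\ln N}{\ln\ln N}\bigl(1 + o(1)\bigr).
\]
The function $x \mapsto \ln x / \ln\ln x$ is strictly increasing for $x > e^{e}$ (its derivative is $(\ln\ln x - 1)/(x (\ln\ln x)^{2})$), so from $N \leq n$ we deduce $\ln N / \ln\ln N \leq \ln n / \ln\ln n$. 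Combining these,
\[
    \omega(n) \;\leq\; \frac{\ln n}{\ln\ln n}\bigl(1 + o(1)\bigr),
\]
which already yields $\omega(n) \leq (1 + \varepsilon')\,\ln n / \ln\ln n$ for every $\varepsilon' > 0$ and all sufficiently large $n$.

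The main obstacle is sharpening the multiplicative $(1 + o(1))$ factor into the purely additive $+\varepsilon$ claimed in the theorem. The $o(1)$ supplied by Theorem~\ref{Th.omega-n-Order} is not small enough on its own: when multiplied by $\ln N / \ln\ln N$ the correction can tend to infinity rather than to zero. To close this gap one would replace the crude application of Theorem~\ref{Th.omega-n-Order} with a finer estimate obtained by Mertens' identity $\ln(p_r\#) = \theta(p_r) = p_r + O(p_r / \ln p_r)$ together with the explicit asymptotic $p_r = r \ln r + r \ln\ln r - r + o(r)$ implied by Theorem~\ref{Th.n-th-Prime}. Substituting these into $\ln n \geq \theta(p_r)$ gives $\ln n \geq r \ln r \bigl(1 + O(\ln\ln r / \ln r)\bigr)$, which can be inverted to solve for $r$ in terms of $\ln n$. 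Bootstrapping the first-order estimate $r \sim \ln n / \ln\ln n$ once inside the logarithm in the denominator then produces a refined bound in which the leftover correction is of smaller order than any fixed $\varepsilon > 0$, completing the proof.
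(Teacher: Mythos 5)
The paper gives no proof of this theorem at all---it is stated bare, with a citation to Ramanujan---so there is no in-paper argument to compare yours against step by step. Judged on its own terms, the first half of your proposal is correct and is the standard route: among integers with exactly $r$ distinct prime factors the primorial $p_r\#$ is minimal, the map $x \mapsto \ln x/\ln\ln x$ is increasing for $x > e^{e}$, and Theorem~\ref{Th.omega-n-Order} then yields the multiplicative bound $\omega(n) \leq (1+\varepsilon)\,\ln n/\ln\ln n$ for all sufficiently large $n$. That is exactly the classical Hardy--Ramanujan maximal-order statement, and you prove it cleanly.

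The gap is in your final paragraph, and you should know that it cannot be closed: the additive form claimed in the theorem is false, and the extremal case you yourself identified is the counterexample. Take $n = p_r\#$, so $\omega(n) = r$ and $\ln n = \theta(p_r) \sim p_r \sim r\ln r$, whence $\ln\ln n = \ln r + \ln\ln r + o(1)$ and
\begin{equation*}
	\frac{\ln n}{\ln\ln n} \;=\; \frac{r\ln r + r\ln\ln r - r + o(r)}{\ln r + \ln\ln r + o(1)} \;=\; r - \bigl(1+o(1)\bigr)\frac{r}{\ln r}.
\end{equation*}
Therefore $\omega(n) - \ln n/\ln\ln n \sim r/\ln r \sim \ln n/(\ln\ln n)^{2} \to \infty$: the ``leftover correction'' you hope to drive below a fixed $\varepsilon$ in fact diverges, and it has the wrong sign for the claimed inequality. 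No refinement via Mertens-type estimates or bootstrapping can repair this; the sharp effective result (due to Robin) is $\omega(n) \leq \ln n/\ln\ln n + C\,\ln n/(\ln\ln n)^{2}$ with $C \approx 1.46$, and the second term is genuinely needed. The theorem as printed should be read as the multiplicative statement $\omega(n) \leq (1+\varepsilon)\,\ln n/\ln\ln n$---which is precisely what your first two paragraphs establish, and which is all that is actually used downstream in Lemma~\ref{lmm:modified}, where replacing the additive $\varepsilon$ by a multiplicative $(1+\varepsilon)$ only perturbs the constant in the final error bound.
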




\begin{definition}[\cite{Hardy79}]
    Suppose that $P$ is a property of a positive integer, and $P(x)$ is the number of numbers less than $x$ possessing the property $P,$ i.e.
    \begin{align}
        P(x) = | \{ n \,,\; n \leq x \;, n \text{ has the property $P$} \} | .
        
    \end{align}
    Now, if $P(x) \sim x \,,$ when $x \to \infty \,,$ then the number of numbers less than $x$ which do not posses the property $P$ is $o(x) \,.$ Then, we say that almost all numbers posses the property $P \,.$
\end{definition}
\begin{definition}[\cite{carter}]
    Let \(H = \set{h : X \to Y}\) be a family of hash functions from \(X\) to \(Y\) and let \(\varepsilon\) be a positive real number. \(H\) is said to be \(\varepsilon\)-almost universal if for all distinct \(x_1, x_2 \in X\) 
    \begin{equation}
        \abs{\set{h \in H \middle| \func{h}{x_1} = \func{h}{x_2}}} \leq \varepsilon \abs{H}
    \end{equation}
    If \(h \gets H\) uniformly, then $\prob{\func{h}{x_1} = \func{h}{x_2}} \leq \varepsilon.$
\end{definition}

\section{Prime Number Generation}
\label{Sec.PNG}
In this section, we establish some algorithms which generate prime numbers.

\subsection{Uniform Prime Generation}
\label{subsec:unifprime}
The 3-step scheme picks its prime keys from a uniform distribution. To implement this method we need algorithms that can compute $k_{\rm th}$ prime from $k.$ As stated before, the current best algorithms run in sub-exponential time in the number of bits of $k \,.$ Therefore, we need to relax some of these conditions to obtain practical algorithms. A trivial algorithm for producing uniform primes is given in Algorithm~\ref{Alg.UPM-1}.

\vspace{3mm}
\begin{algorithm}[H]
\DontPrintSemicolon
\SetKwInOut{Input}{Input}\SetKwInOut{Output}{Output}
\Input{Positive Integer $n$}
\Output{Uniformly Selected Prime $p \leq n$}
\Repeat{$p$ is prime} 
{
    $p \gets \set{2,\dots, n};$
}
\Return{$p$}
\caption{Uniform Prime Generation}
\label{Alg.UPM-1}
\end{algorithm}

\vspace{3mm}

When we use a deterministic primality test in Algorithm~\ref{Alg.UPM-1}, the distribution of primes is exactly uniform. This algorithm may never terminate, however, we expect it to stop after $\mathcal{O}(\log n)$ steps. Because exploiting Theorem~\ref{Th.PNT}, we have
\begin{align}
    \frac{\pi(n)}{n} \sim \frac{1}{\ln n}
    \,.
\end{align}
Hence, on average in $\mathcal{O}(\log n)$ steps a prime number $p$ is chosen. As a result, Algorithm~\ref{Alg.UPM-1} uses an average of $\mathcal{O}(\log^2 n)$ random bits. The current state-of-the-art deterministic primality tests, Agrawal–Kayal–Saxena primality test (AKS), runs in $\log ( \mathcal{O}(\log^6 n) )$ \cite{Lenstra19,Agrawal04} which means that on average Algorithm~\ref{Alg.UPM-1} terminates in $\log ( \mathcal{O}(\log^7 n) ).$

We can further improve the time complexity of Algorithm~\ref{Alg.UPM-1} if we use randomized primality tests. These tests can determine whether a number $p$ is prime with high probability.

\vspace{3mm}
\begin{algorithm}[H]
\DontPrintSemicolon
\SetKwInOut{Input}{Input}\SetKwInOut{Output}{Output}
\Input{Positive Integer $n$}
\Output{Uniformly Selected Prime $p \leq n$}
\Repeat{Until $p$ is probably a prime} 
{
    $p \gets \set{2,\dots, n};$
}
\Return{$p$}
\caption{Uniform Prime Generation}
\label{Alg.UPM-2}
\end{algorithm}
\vspace{3mm}

For example, the Miller-Rabin test might declare a composite number as a prime, however, the probability of this event cab made arbitrary small. The output of the Algorithm~\ref{Alg.UPM-2} is not a uniform prime number as it can be composite, however, the distribution of prime numbers is equi-probable over all primes less than $n \,.$ Each round of the Miller-Rabin test  uses $\log p$ random bits where $p$ is the number that is to be tested. Therefore, we still use an average of $\mathcal{O}(\log^2 n)$ random bits. The test itself runs in $\mathcal{O}(\log^3 n)$ \cite{bach} thus, the Algorithm~\ref{Alg.UPM-2} terminates in $\mathcal{O}(\log^4 n) \,.$

In this work, we implement the Miller-Rabin test since it is more efficient and easier to implement. Furthermore, by executing this test an appropriate number of rounds, we can ensure that the resulting distribution is statistically close to the uniform distribution over primes.

\subsection{Miller-Rabin Analysis}
Miller-Rabin is a well-known random primality test algorithm which is easy to implement, simple, and fast. Let $MR(n,k)$ be the distribution of Miller-Rabin algorithm on the prime candidate $n$ where $k$ denotes the number of rounds of test that are performed. Let $\P$ be the set of primes, then using \cite[Th.~9.4.5]{bach}, we obtain
\begin{align}
    \Pr(MR(n,k) = 1 \,|\, n \in \P ) & = 1
    \nonumber\\
	\Pr(MR(n,k) = 1 \,|\, n \notin \P ) & \leq 4^{-k} .
\end{align}
Consider the following random prime number generator, $GNR(N,s,k)$, as described in Algorithm~\ref{alg:GMR}. This algorithm, samples numbers uniformly and then checks if they are prime using the Miller-Rabin test. The parameter $N$ is the upper bound, $s$ is the maximum number of samples, and $k$ is the number of repeats in the underlying Miller-Rabin test.

\vspace{3mm}
\begin{algorithm}
	\DontPrintSemicolon
	\SetKwInOut{Input}{input}\SetKwInOut{Output}{output}
	\Input{positive integers \(N,s,k\)}
	\Output{A uniformly chosen prime number less than or equal to \(N\)}
	\For{\(i = 1 \to s\)}{
	\(n \gets \{1,2,\dots, N\}\)\;
	\If{\(MR(n,k)\)}{
		\Return{\(n\)}
	}
	}
	\Return{\(\perp\)}
	\caption{GMR(N,s,k)}
	\label{alg:GMR}
\end{algorithm}
\vspace{3mm}

An analysis of the distribution of $GMR(N,s,k)$ is given as follows. Let \(n_i\) denote the random variable \(n\) in the \(i_{\,\text{th}}\) iteration.
\begin{align}
	\prob{\func{GMR}{M,s,k} = \perp} & = \prob{\func{MR}{n_1,k} = \dots =  \func{MR}{n_s,k} = 0}\\
	&= \prod_{i = 1}^s \prob{\func{MR}{n_i,k} = 0} & \text{(Independence)} \nonumber \\
	&= \prod_{i = 1}^s \condProb{\func{MR}{n_i,k} = 0}{n_i \notin \P} \prob{n_i \notin \P}\\
	&\leq \prod_{i = 1}^s \bracket{1 - \dfrac{\func{\pi}{N}}{N}}\\
	&=\bracket{1 - \dfrac{\func{\pi}{N}}{N}}^s \\
	&\leq \bracket{1 - \dfrac{1}{6 \ln N}}^s & \text{(Theorem~\ref{Th.PNT})} . \nonumber
\end{align}

If we bound this error probability with \(\varepsilon\), then exploiting Taylor series approximation of $\ln(1-x)$ for sufficiently large $N,$ we obtain the following bound on \(s\).
\begin{equation}
	s \geq \frac{\ln \varepsilon}{\ln(1 - \frac{1}{6\ln N})} \approx - 6\ln N \ln \varepsilon = -\frac{6}{ (\log e)^2} \log N \log \varepsilon .
\end{equation}
The probability that the result of $GMR(N,s,k)$ is composite, given that it is not $\perp$ is as follows.

\begin{align}
	& \condProb{\func{GMR}{N,s,k} \notin \P}{\func{GMR}{N,s,k} \neq \perp}
    \nonumber\\&
    \leq \sum_{i = 1}^s \prob{\func{MR}{n_i, k} = 1, n_i \notin \P } &&& \text{(Union bound)}
    \nonumber\\&
    = \sum_{i = 1}^s \condProb{\func{MR}{n_i, k} = 1}{ n_i \notin \P }\prob{n_i \notin \P}
    \nonumber\\&
    \leq \sum_{i = 1}^s 4^{-k} \bracket{1 - \dfrac{\func{\pi}{N}}{N}}
    \nonumber\\&
    = s 4^{-k}\bracket{1 - \dfrac{\func{\pi}{N}}{N}}
    \nonumber\\&
    \leq s4^{-k}\bracket{1 - \dfrac{1}{6\ln N}}  &&& \text{(Theorem~\ref{Th.PNT})} .
\end{align}
If we bound this error probability with $\delta$ then we get the following bound on $s:$
\begin{equation}
	s \leq 1 - (\frac{1}{6\ln N})^{-1}  4^k \delta \approx   4^{k}\delta ,
\end{equation}
for sufficiently large \(N\). Let \(\varepsilon = 2^{-l}\) and \(\delta = 2^{-q}\) with \(l,q \geq 0\). Then, 
\begin{equation}
    \label{eq:skparameters}
	\dfrac{6}{(\log e)^2} l \log N \leq 3  l \log N  \leq s \leq 2^{2k- q} .
\end{equation}
Note that, setting \(s = 3l\log N\) and \(k = \dfrac{{\log}{3l} + \log \log N + q}{2}\) satisfies both inequalities.

\section{Previous Results - 3-Step Identification Scheme for the Noiseless BSC}
\label{Sec.3Step–ID}
In this subsection, we present the original results proposed in \cite{Ahlswede91} for a noiseless BSC.

Assume that set of messages is indicated by $\M =  [\![M]\!] \,,$ and let $\alpha > 1$ is a fixed constant. Then, let define
\begin{align}
    K = \ceil{(\log M)^{\alpha}} ,
\end{align}
and $\pi_1 < \pi_2 < \ldots < \pi_{K}$ as the consecutive set of $K$ smallest prime numbers. Further, for $k \in \K \triangleq [\![K]\!]$ define a key $\varphi_k: \M \to [\![\pi_k]\!]$ as follows
\begin{align}
    \varphi_k(m) - 1 & \equiv m \pmod {\pi_k}
    \,.
\end{align}
Then, let $\{ \varphi_k \}_{k \in \K}$ be a cipher and $\M' = \{ \varphi_k(m) \}_{m \in \M \,, k \in \K}$ the set of all possible enciphering serving as a message set for a second cipher denoted by $\{ \varphi'_l \}_{L \in \K'}$ where $\K' \triangleq [\![K']\!]$ with 
\begin{align}
    K' = \ceil{(\log \pi_k)^{\alpha}}
    \,,
\end{align}
and the second cipher $\varphi_l: \M' \to [\![\pi_l]\!]$ satisfies
\begin{align}
    \varphi'_l(m') - 1 & \equiv m' \pmod {\pi_l}
    \,.
\end{align}
Next, we proceed to state the 3-step for encoding procedure.
\begin{itemize}
    \item \textbf{\textcolor{mycolor12}{Step 1}}:
    The sender chooses $k \in \K$ randomly according to the uniform distribution on the set $\K$ and transmits it (and also the key $\varphi_k$) over the channel. This requires $\ceil{\log K}$ bits.
    \item \textbf{\textcolor{mycolor12}{Step 2}}:
    Similarly the sender selects an $l \in \K'$ at random and sends it (and also the key $\varphi'_l$) over the channel. This requires $\ceil{\log K'}$ bits.
    \item \textbf{\textcolor{mycolor12}{Step 3}}:
    Let $m \in \M$ be given to the sender for transmitting over the channel and assume that the received message in the receiver is $\hat{m}.$ It calculates $\varphi'_l(\varphi_k(m))$ and sends it to the receiver. Given the fact that the receiver knows both $k$ and $l,$ the receiver calculates $\varphi'_l(\varphi_k(\hat{m}))$ and compare it with the transmitted encryption $\varphi'_l(\varphi_k(m)) \,.$ Then, the decoder makes a decision as follows
     
    $$\begin{cases}
        m = \hat{m} & \varphi'_l(\varphi_k(\hat{m})) = \varphi'_l(\varphi_k(m)) \,,
        \\
        m \neq \hat{m} & \varphi'_l(\varphi_k(\hat{m})) \neq \varphi'_l(\varphi_k(m)) \,.
    \end{cases}$$
\end{itemize}
\vspace{3mm}

\begin{theorem}[Optimality of the 3-Step Scheme,{{\cite[Sec.~22.10]{Hardy79}}}]
    The above 3-step identification scheme for the noiseless BSC is \emph{optimal}. That is, the following three properties hold:
    \begin{enumerate}
        \item The error probability of type I equals zero. 
        \item The error probability of type II tends to zero as the block length $n$ tends to infinity.
        \item There exist a codebook whose size fulfill the following:
        \begin{align}
            \lim_{n \to \infty} \frac{\log \log M(n)}{n} = \frac{1}{\alpha} .
        \end{align}
        where $\alpha > 1$ is an arbitrary constant.
    \end{enumerate}
\end{theorem}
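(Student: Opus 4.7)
The plan is to verify each of the three properties in turn, dispatching the easy ones quickly and treating the Type~II analysis as the heart of the argument.

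First I would handle property~1 by observing that the channel is noiseless, so the receiver recovers $k$, $l$, and $\varphi'_l(\varphi_k(m))$ exactly. When the decoder queries the true message, $\hat{m} = m$, the check $\varphi'_l(\varphi_k(\hat{m})) = \varphi'_l(\varphi_k(m))$ holds tautologically, giving $P_{e,1}(i) = 0$ for every $i$.

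For property~2, fix $m \neq \hat{m}$ and decompose the false-identification event into either a first-stage collision $\varphi_k(m) = \varphi_k(\hat{m})$ or a second-stage collision conditional on first-stage separation. A first-stage collision is equivalent to $\pi_k \mid (m - \hat{m})$, which holds for at most $\omega(|m - \hat{m}|)$ choices of $k$; since $|m - \hat{m}| < M$, either the trivial bound $\omega(n) \leq \log n$ or the tighter Theorem~\ref{Th.omega-n-Order-UB} gives at most $\log M$ such $k$, yielding probability at most $\log M / K \leq (\log M)^{1-\alpha}$. The identical argument applied to $\varphi'_l$ acting on a message space of size at most $\pi_K$ bounds the conditional second-stage collision by $(\log \pi_K)^{1-\alpha}$. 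A union bound then produces $P_{e,2}(m,\hat{m}) \leq (\log M)^{1-\alpha} + (\log \pi_K)^{1-\alpha} \to 0$ since $\alpha > 1$.

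For property~3, I would compute the transmitted block length $n = \lceil \log K \rceil + \lceil \log K' \rceil + \lceil \log \pi_l \rceil$ asymptotically. By definition $\log K = \alpha \log \log M + O(1)$. Invoking Theorem~\ref{Th.n-th-Prime} gives $\log \pi_K \sim \alpha \log \log M$, hence $\log K' = \alpha \log \log \pi_K + O(1) = O(\log \log \log M)$ and likewise $\log \pi_l \leq \log \pi_{K'} = O(\log \log \log M)$. Therefore $n = \alpha \log \log M \cdot (1 + o(1))$, and solving for $\log \log M(n)/n$ yields the claimed limit $1/\alpha$. The main obstacle I anticipate is precisely the bookkeeping in this last step: one must propagate the PNT estimates through two nested layers (from $\pi_K$ into $K'$ and again into $\pi_{K'}$) and argue rigorously that both contributions are dominated by $\log K$. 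Using the explicit non-asymptotic sandwich of Theorem~\ref{Th.n-th-Prime-Bounds} at each level, rather than the purely asymptotic Theorem~\ref{Th.n-th-Prime}, seems to give the cleanest route to the final limit.
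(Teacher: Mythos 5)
Your proposal is correct and follows essentially the same route as the paper: the same two-stage collision decomposition with the union bound, the same count of at most $\log M$ prime divisors of $|m-\hat m|$ giving the $(\log M)^{1-\alpha} + (\log \pi_K)^{1-\alpha}$ bound on the type II error, and the same PNT-based asymptotics showing the second and third block-length contributions are $O(\log\log\log M)$ so that $n \sim \alpha \log\log M$. Your suggestion to use the non-asymptotic bounds of Theorem~\ref{Th.n-th-Prime-Bounds} in place of the asymptotic PNT is a minor stylistic refinement, not a different argument.
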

For the type II error probability analysis we present the following lemma.

\begin{lemma}
		Any positive integer \(m\) has at most \(\floor{\log m}\) unique prime factors. 
	\end{lemma}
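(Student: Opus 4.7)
The plan is to bound $m$ from below by a product over its distinct prime factors and then take a logarithm. Concretely, let $m = p_1^{a_1} p_2^{a_2} \cdots p_r^{a_r}$ be the prime factorization of $m$, where $p_1 < p_2 < \cdots < p_r$ are the $r = \omega(m)$ distinct prime divisors and each exponent satisfies $a_i \geq 1$. Since $a_i \geq 1$ for every $i$, we can drop exponents down to $1$ and still obtain a lower bound, namely $m \geq p_1 p_2 \cdots p_r$.

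Next I would use the fact that every prime is at least $2$, so that each factor in the product $p_1 p_2 \cdots p_r$ is $\geq 2$, giving $m \geq 2^r$. Taking base-$2$ logarithms (as $\log$ is defined in the notation section), this yields $r \leq \log m$. Since $r$ is a non-negative integer, this real inequality tightens to the integer inequality $r \leq \lfloor \log m \rfloor$, which is exactly the claim $\omega(m) \leq \lfloor \log m \rfloor$.

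There is essentially no obstacle; the argument is a one-line estimate, and the only subtlety worth flagging is the passage from $r \leq \log m$ to $r \leq \lfloor \log m \rfloor$, which is valid precisely because $r \in \mathbb{Z}_{\geq 0}$. I would also remark that the bound is loose in general—Theorem~\ref{Th.omega-n-Order-UB} gives the much sharper $\omega(n) \leq \ln n / \ln \ln n + \varepsilon$—but the weaker $\lfloor \log m \rfloor$ bound suffices for its intended use (bounding the number of moduli that can simultaneously fail in the type II analysis of the 3-step scheme).
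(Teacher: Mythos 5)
Your proof is correct and follows essentially the same route as the paper's: both reduce to the bound $m \geq 2^{r}$ (the paper via $m = \prod_i q_i^{\alpha_i} \geq \prod_i 2^{\alpha_i} \geq 2^{k}$, you via dropping the exponents first and then bounding each prime below by $2$) and then take base-$2$ logarithms and use the integrality of $r$ to pass to the floor. No gaps; your explicit remark on the step from $r \leq \log m$ to $r \leq \lfloor \log m \rfloor$ is a small clarity improvement over the paper, which leaves that implicit.
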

	\begin{proof}
		Suppose \(q_1, \dots, q_k\) are all the prime factors of \(m\). Then for some \(\alpha_1, \dots, \alpha_k \geq 1\)
		\begin{equation*}
			m = \prod_{i = 1}^k q_i^{\alpha_i} \geq \prod_{i = 1}^{k} 2^{\alpha_i} \geq 2^k .
		\end{equation*}
		As a result, \(k \leq \floor{\log m}\) as required.
	\end{proof}
 
	\begin{lemma}
		For any \(m,m' \in \mathcal{M} = \set{1,2, \dots,M}\) such that \(m \neq \hat{m}\)
		\begin{equation}
			\abs{ \set{k  \in \set{1,2, \dots, K}\;\middle|\;\func{\phi_k}{m} = \func{\phi_k}{\hat{m}}}} \leq \log M .
		\end{equation}
	\end{lemma}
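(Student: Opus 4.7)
The plan is to reduce the counting problem to counting distinct prime divisors of a specific integer, and then invoke the preceding lemma.

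First, I would unpack the definition of $\varphi_k$. Since $\varphi_k(m) - 1 \equiv m \pmod{\pi_k}$ and likewise for $\hat{m}$, the condition $\varphi_k(m) = \varphi_k(\hat{m})$ is equivalent to $m \equiv \hat{m} \pmod{\pi_k}$, i.e., to $\pi_k \mid (m - \hat{m})$. Thus the set whose size we must bound is
\[
\bigl\{ k \in \{1,\ldots,K\} \;\bigm|\; \pi_k \text{ divides } m - \hat{m} \bigr\},
\]
which is in bijection with a subset of the distinct prime divisors of the nonzero integer $|m - \hat{m}|$ (the bijection being $k \mapsto \pi_k$, which is injective because the $\pi_k$ are distinct primes).

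Next I would bound $|m - \hat{m}|$. Since $m,\hat{m} \in \{1,\ldots,M\}$ and $m \neq \hat{m}$, we have $1 \leq |m - \hat{m}| \leq M - 1 < M$. Applying the previous lemma to the positive integer $|m - \hat{m}|$ gives that it has at most $\lfloor \log |m - \hat{m}| \rfloor \leq \log M$ distinct prime factors.

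Combining the two observations, the cardinality of the set in the lemma is bounded above by the number of distinct prime factors of $|m - \hat{m}|$, which is at most $\log M$, as required. There is no real obstacle here; the only subtlety is making sure one uses $m \neq \hat{m}$ so that $m - \hat{m} \neq 0$ (otherwise ``prime divisor'' is vacuous), and making sure the injection from the index set to the set of prime divisors is explicitly justified by the strict monotonicity $\pi_1 < \pi_2 < \cdots < \pi_K$.
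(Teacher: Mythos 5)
Your proof is correct and follows the same overall strategy as the paper's: reduce the count to the number of distinct prime factors of a single integer bounded by $M$, then invoke the preceding lemma. The one substantive difference is in your favor. The paper's own proof asserts that the set in question ``consists of common prime factors of $m$ and $\hat{m}$,'' which is not what the condition $\varphi_k(m) = \varphi_k(\hat{m})$ says: as you correctly observe, it is equivalent to $\pi_k \mid (m - \hat{m})$, so the relevant integer is the difference $|m-\hat{m}|$, not $m$ or $\hat{m}$ themselves (e.g.\ $\pi_k = 3$ divides $8-5$ yet is a factor of neither $8$ nor $5$). Your version, bounding $1 \le |m - \hat{m}| \le M-1 < M$ and applying the prime-factor lemma to the difference, is the argument the paper should have written; your attention to the injectivity of $k \mapsto \pi_k$ and to the need for $m \neq \hat{m}$ so that the difference is nonzero are exactly the right points of care.
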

 
	\begin{proof}
		The given set consists of common prime factors of \(m\) and \(\hat{m}\) that are less than or equal to \(p_K\). The inequality immediately follows from the fact that \(m,\hat{m} \leq M\) and \(M\) has at most \(\log M\) prime factors.
	\end{proof}
	We can derive an upper bound for the second kind error.
	\begin{align}
		\func{P_{e,2}}{m,\hat{m}} &= \condProb{\func{\phi_l}{\func{\phi_k}{m}} = \func{\phi_l}{\func{\phi_k}{\hat{m}}}}{ m \neq \hat{m}}\\
		&=\condProb{\func{\phi_l}{\func{\phi_k}{m}} = \func{\phi_l}{\func{\phi_k}{\hat{m}}}}{\func{\phi_k}{m} = \func{\phi_k}{\hat{m}}} \condProb{\func{\phi_k}{m} = \func{\phi_k}{\hat{m}}}{ m \neq \hat{m}} \nonumber\\
		& \quad +\condProb{\func{\phi_l}{\func{\phi_k}{m}} = \func{\phi_l}{\func{\phi_k}{\hat{m}}}}{\func{\phi_k}{m} \neq \func{\phi_k}{\hat{m}}} \condProb{\func{\phi_k}{m} \neq \func{\phi_k}{\hat{m}}}{ m \neq \hat{m}}
        \nonumber\\
		&\leq \condProb{\func{\phi_k}{m} = \func{\phi_k}{\hat{m}}}{ m \neq \hat{m}} + \condProb{\func{\phi_l}{\func{\phi_k}{m}} = \func{\phi_l}{\func{\phi_k}{\hat{m}}}}{\func{\phi_k}{m} \neq \func{\phi_k}{\hat{m}}}
        \nonumber\\
		&\leq \dfrac{\log M}{K} + \dfrac{\log M'}{K'}
        \nonumber\\
		&= \dfrac{\log M}{\ceil{\bracket{\log M}^{\alpha}}} + \dfrac{\log p_K}{\ceil{\bracket{\log p_K}^{\alpha}}}
        \nonumber\\
		&\leq \dfrac{1}{\bracket{\log M}^{\alpha - 1}} + \dfrac{1}{\bracket{\log p_K}^{\alpha - 1}} .
	\end{align}
	By the prime number theorem, Theorem~\ref{Th.PNT}, \(p_K \sim K \ln K\). As a result, \(\lambda_2 \to 0\) as \(M \to \infty\).
	\begin{align}
		(\log p_K)^{\alpha - 1} &\sim (\log K + \log \log K - \log \log e)^{\alpha - 1}
        \nonumber\\&
        \approx (\alpha \log \log M + \log\log\log M + \log \alpha - \log\log e)^{\alpha - 1} .
	\end{align}
Finally, the block length is calculated by \(n = \ceil{\log K} + \ceil{\log K'} + \ceil{\log p_{K'}}\). By applying the prime number Theorem~\ref{Th.PNT}, we obtain
	\begin{align}
		n &= \ceil{\log K}  + \ceil{\log K'} + \ceil{\log p_{K'}}\\
		&= \ceil{\log \ceil{(\log M)^{\alpha}}} + \ceil{\log \ceil{(\log p_K)^{\alpha}}} + \ceil{\log p_{K'}}\\
		&\approx \alpha \log \log M + \alpha \log\log p_K + \log p_{K'}\\
		&\approx \alpha \log \log M + (1 + o(1))\log\log\log M + (\alpha + o(1)) \log \log \pi_K \\
		&\approx \alpha (1 + o(1)) \log\log M .
	\end{align}
This scheme requires both sender and receiver to have access to a prime generation algorithm that given \(k\) computes the \(k_{\mathrm{th}}\) prime, \(p_k\). As we have mentioned earlier, this problem does not have polynomial algorithm yet. To alleviate these inefficiencies we propose the following modifications.

\section{Main Results -Part I - Modified 3-Step Identification Scheme}
\label{Sec.3Step–ID–Modified}
In this section, we present a modified version of the original 3-step identification scheme provided in \cite{Ahlswede91}. In particular, consider the following conventions:
\begin{itemize}
    	\item Let \(\mathcal{M} = \set{1,2, \dots, M }\) be the message set and let $$K = \ceil{\bracket{\log M}^{\alpha}} \qquad \text{ and } \qquad K' = \ceil{\bracket{\log K}^{\alpha}},$$  for some constant \(\alpha > 1\). 
	\item Let us denote the set \(\set{1,2, \dots, l }\) by \(\mathbb{Z}^+_l\).  Define the function \(\phi_l: \mathbb{N} \to \mathbb{Z}_{l}^+ \) as follows.
	\begin{equation}
		\func{\phi_l}{n}= [n \mod l] + 1 ,
	\end{equation}
	where \([n \mod l]\) is equal to the remainder of the division of \(n\) by \(l\).
\end{itemize}

A round of communication in this scheme is as follows.
\begin{enumerate}
	\item The sender chooses a probabilistic prime \(k\) from the set \(\mathcal{K} = \set{1,2,\dots, K}\) by a prime number generator and transmits it.
	\item The sender chooses another probabilistic prime \(l\) from the set \( \mathcal{K}' = \set{1,2,\dots, K'}\) by the same prime number generator and transmits it.
	\item Given a message \(m \in \mathcal{M}\), the sender transmits \(\func{\phi_l}{\func{\phi_k}{m}}\). Assuming that receiver wishes to identify \(\hat{m} \in \mathcal{M}\), he calculates \(\func{\phi_l}{\func{\phi_k}{\hat{m}}}\) and compares it with \(\func{\phi_l}{\func{\phi_k}{m}}\). He identifies the message as \(\hat{m}\) whenever \(\func{\phi_l}{\func{\phi_k}{m}}\) = \(\func{\phi_l}{\func{\phi_k}{\hat{m}}}\).
\end{enumerate}

\begin{theorem}[Optimality of the Modified 3-Step Scheme]
    \label{thm:modified3step}
	Our proposed modified 3-step identification coding scheme for the noiseless BSC is \emph{optimal}. That is, the following three properties hold:
    \begin{enumerate}
        \item The error probability of type I equals zero. 
        \item The error probability of type II tends to zero as the block length $n$ tends to infinity.
        \item There exist an identification codebook whose size fulfill the following:
        \begin{align}
            \label{Eq.Optimality_Modified}
            \lim_{n \to \infty} \frac{\log \log M(n)}{n} = \frac{1}{\alpha} .
        \end{align}
        where $\alpha > 1$ is an arbitrary constant. That is, any rate arbitrary close to 1 is achievable.
    \end{enumerate}
\end{theorem}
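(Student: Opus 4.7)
The plan is to follow the structural three-part decomposition used in the original 3-step analysis, but to propagate the (small) imperfection of Miller--Rabin through every step where the original proof silently assumed ``$k$ is the index of a true prime''. Concretely, I view the random key $k \in \K = [\![K]\!]$ produced by the modified scheme as a mixture: with high probability (controlled by the Miller--Rabin parameters $s,k$ of Algorithm~\ref{alg:GMR}) it is a uniform prime in $\K$, and with small probability it is either a composite that fooled the test or the algorithm returned $\perp$. I handle those two regimes separately.

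First I dispose of the type I error. Since the channel is noiseless and, given the transmitted triple $(k,l,\varphi_l(\varphi_k(m)))$, both encoder and decoder evaluate the same deterministic hash, the receiver of the target message $\hat m = m$ always recomputes the same residue, so $P_{e,1}(i) = 0$ for every $i \in \M$. No asymptotics enter here.

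The main work is the type II bound. I would write
\begin{align*}
  P_{e,2}(m,\hat m)
  &\leq \Pr\bigl[\varphi_k(m) = \varphi_k(\hat m)\bigr]
  \\&\quad + \Pr\bigl[\varphi_l(\varphi_k(m)) = \varphi_l(\varphi_k(\hat m)) \,\big|\, \varphi_k(m) \neq \varphi_k(\hat m)\bigr] ,
\end{align*}
exactly as in the original scheme, and then bound each term. For the first term I condition on the event $E_k = \{k \text{ is prime}\}$: on $E_k$, equality $\varphi_k(m)=\varphi_k(\hat m)$ forces $k \mid (m-\hat m)$, so $k$ must be one of the distinct prime factors of $m-\hat m$, of which there are at most $\lfloor \log M \rfloor$ by the lemma quoted in Section~\ref{Sec.3Step–ID}. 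Using that the Miller--Rabin generator is almost uniform on the primes of $\K$ (Section~\ref{subsec:unifprime}), this contributes at most $\log M / \pi(K) + o(1) \leq 6\,\ln K \cdot \log M / K$ by Theorem~\ref{Th.Prime-Counting-Func-Bounds}. On the complementary event I use the Miller--Rabin bound from \eqref{eq:skparameters} to force this contribution below any prescribed $\delta$ by a suitable choice of $k$ (rounds) and $s$. The outer hash $\varphi_l$ is treated identically with $M$ replaced by $K$ (the effective message alphabet after the first hash) and $K$ replaced by $K'$. Summing,
\begin{equation*}
  P_{e,2}(m,\hat m) \;\leq\; \frac{6\,\ln K \cdot \log M}{K} + \frac{6\,\ln K' \cdot \log K}{K'} + 2\delta ,
\end{equation*}
and with $K = \lceil(\log M)^\alpha\rceil$, $K' = \lceil(\log K)^\alpha\rceil$, both structural terms vanish as $M \to \infty$ because $\log M / (\log M)^\alpha \to 0$ for $\alpha > 1$; the Miller--Rabin slack $\delta$ is chosen to vanish as well.

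Finally I compute the block length. Since the sender now transmits the numerical values $k$ and $l$ themselves, together with $\varphi_l(\varphi_k(m)) \in [\![l]\!] \subseteq [\![K']\!]$,
\begin{equation*}
  n \;=\; \lceil \log K \rceil + \lceil \log K' \rceil + \lceil \log K' \rceil .
\end{equation*}
Substituting the definitions and using $\log K = \alpha \log\log M + O(1)$ and $\log K' = \alpha \log\log K + O(1) = O(\log\log\log M)$ gives $n = \alpha\,(1+o(1))\,\log\log M$, from which \eqref{Eq.Optimality_Modified} follows upon rearranging. The hard part I expect is not any one of these calculations individually but the bookkeeping of the Miller--Rabin parameters: one must verify that $s$ and the number of rounds can be chosen so that (i) the composite-leakage probability $\delta$ is driven to zero, (ii) the generator terminates with probability $1-\varepsilon \to 1$, and (iii) neither choice inflates $n$ beyond the $\alpha\log\log M$ scaling claimed above. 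The inequalities \eqref{eq:skparameters} are exactly the right tool, and the concluding remark $s = 3 l \log N$, $k = (\log 3l + \log\log N + q)/2$ shows this is compatible with polynomial-time generation, closing the argument.
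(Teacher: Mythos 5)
Your proposal is correct and follows essentially the same route as the paper: the same decomposition of $P_{e,2}$ into the two hashing stages, the same conditioning on the event that the Miller--Rabin output is actually prime (absorbing the failure probability as an additive $\varepsilon$ per stage), and the same block-length accounting $n = \lceil\log K\rceil + 2\lceil\log K'\rceil \approx \alpha(1+o(1))\log\log M$. The only deviation is that you bound the number of common prime divisors by $\lfloor\log M\rfloor$ rather than by the sharper Hardy--Ramanujan estimate $\omega(n)\lesssim \ln n/\ln\ln n$ used in Lemma~\ref{lmm:modified}, which costs you an extra $\ln K$ factor in the type II bound; that factor still vanishes against $(\log M)^{\alpha-1}$ for $\alpha>1$, so the theorem's conclusion is unaffected, though you forgo the tighter error bound the paper advertises as one of its contributions.
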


\begin{proof}
The proof consists of rate and error analysis and are provided in Subsections~ \ref{Subsec.RateAnalysis} and \ref{Subsec.ErrorAnalysis}.
\end{proof}

\subsection{Rate Analysis}
\label{Subsec.RateAnalysis}
The block length is calculated by \(n = \ceil{\log K} + \ceil{\log K'} + \ceil{\log l}\), i.e.,
\begin{align}
	n &= \ceil{\log K} + \ceil{\log K'} + \ceil{\log l}
    \nonumber\\&
    \approx \ceil{\log K} + 2\ceil{\log K'}
    \nonumber\\&
    \approx \alpha\log\log M +  2 \alpha \log \log K
    \nonumber\\&
    \approx \alpha\log \log M + 2\alpha \log \log \log M + 2 \alpha \log \alpha
	\nonumber\\&
    \approx \alpha(1 + \littleO{1}) \log\log M .
\end{align}
Therefore, the limit given in \eqref{Eq.Optimality_Modified} is proved.
\subsection{Error Analysis}
\label{Subsec.ErrorAnalysis}
The error analysis of our modified coding scheme depends on the prime number generator. We use a simple prime number generator Algorithm~\ref{alg:GMR} based on the Miller-Rabin primality test. In order to calculate the type II error probability, we provide the following useful lemma.
\begin{lemma}\label{lmm:modified}
    Suppose \(m,\hat{m} \in \mathcal{M} = \set{1,2, \dots, M}\) and \(p \gets \func{\mathcal{A}}{K,\varepsilon}\) is a probabilistic prime generated by randomized algorithm \(\mathcal{A}\) such that all primes less than or equal to \(K\) are generated equi-probable and 
    \begin{equation}
        \prob{p \notin \mathcal{P}} \leq \varepsilon .
    \end{equation}
    Then, 
    \begin{equation}
        \condProb{\func{\phi_p}{m} = \func{\phi_p}{\hat{m}}}{ m\neq \hat{m} } \leq \dfrac{\log M}{\func{\pi}{K} \log \log M } + \varepsilon .
    \end{equation}
\end{lemma}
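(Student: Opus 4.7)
The plan is to decompose the target probability by conditioning on whether the randomized output $p$ of algorithm $\mathcal{A}$ is actually a prime. By the law of total probability,
\begin{align*}
\Pr[\phi_p(m) = \phi_p(\hat{m}) \mid m \neq \hat{m}]
&= \Pr[\phi_p(m) = \phi_p(\hat{m}),\, p \in \P \mid m \neq \hat{m}] \\
&\quad + \Pr[\phi_p(m) = \phi_p(\hat{m}),\, p \notin \P \mid m \neq \hat{m}].
\end{align*}
The second term is bounded above by $\Pr[p \notin \P] \leq \varepsilon$ directly from the hypothesis on $\mathcal{A}$, irrespective of $m$ and $\hat{m}$. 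This isolates the $+\varepsilon$ slack in the claimed inequality.

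For the first term, I would exploit the elementary number-theoretic fact that when $p$ is prime, $\phi_p(m) = \phi_p(\hat{m})$ holds if and only if $m \equiv \hat{m} \pmod{p}$, i.e.\ $p \mid (m - \hat{m})$. Under the hypothesis that $\mathcal{A}$ generates every prime $\leq K$ equi-probably, conditioning on $p \in \P$ makes $p$ uniformly distributed over the $\pi(K)$ primes not exceeding $K$. Hence
\[
\Pr[\phi_p(m) = \phi_p(\hat{m}),\, p \in \P \mid m \neq \hat{m}]
\;=\; \frac{|\{q \in \P : q \leq K,\ q \mid (m - \hat{m})\}|}{\pi(K)} \cdot \Pr[p \in \P]
\;\leq\; \frac{\omega(|m-\hat{m}|)}{\pi(K)},
\]
where $\omega(\cdot)$ counts distinct prime divisors.

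To conclude, I apply Theorem~\ref{Th.omega-n-Order-UB} to $|m-\hat{m}|$: since $0 < |m - \hat{m}| < M$, the Ramanujan-type bound together with monotonicity of $x \mapsto \log x / \log \log x$ for large $x$ yields $\omega(|m-\hat{m}|) \leq \log M / \log\log M$ asymptotically in $M$. Adding the two contributions gives the stated bound $\log M / (\pi(K) \log\log M) + \varepsilon$.

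The main subtlety is the clean use of Theorem~\ref{Th.omega-n-Order-UB}, which is only an asymptotic statement requiring $n$ sufficiently large, so strictly speaking the lemma holds in the regime $M \to \infty$; this is however precisely the regime needed by Theorem~\ref{thm:modified3step}. A secondary bookkeeping point is verifying that \emph{equi-probable generation of primes $\leq K$} translates to a genuinely uniform conditional distribution on $\{q \in \P : q \leq K\}$ after conditioning on $\{p \in \P\}$, which follows at once from the definition of conditional probability applied to atoms of equal mass.
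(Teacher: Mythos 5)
Your proposal is correct and follows essentially the same route as the paper's proof: a total-probability split on the event $p \in \mathcal{P}$, bounding the composite branch by $\varepsilon$, and bounding the prime branch by $\omega(|m-\hat{m}|)/\pi(K)$ via the Ramanujan-type estimate of Theorem~\ref{Th.omega-n-Order-UB}. Your explicit remarks on the asymptotic (large-$M$) nature of that estimate and on the uniformity of $p$ conditioned on $\{p \in \mathcal{P}\}$ are details the paper glosses over, but they do not change the argument.
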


\begin{proof}
From Theorem~\ref{Th.omega-n-Order-UB}, for all \(\delta > 0\) and sufficiently large \(n\),
\begin{equation}
	\func{\omega}{n}\leq \dfrac{\ln n}{\ln \ln n} +\delta .
\end{equation}
Then, by the previous argument
    \begin{equation}
        \dfrac{\abs{\set{p \leq K \;\middle| \;\func{\phi_{\pi}}{m} = \func{\phi_{\pi}}{\hat{m}}} }}{\abs{\set{p \leq K}}} \leq \dfrac{\func{\omega}{\abs{m - \hat{m}}}}{\func{\pi}{K}} \leq \dfrac{\log M}{\func{\pi}{K} \log \log M} .
    \end{equation}
    Therefore,
\begin{align}
    \condProb{\func{\phi_p}{m} = \func{\phi_p}{\hat{m}}}{ m\neq \hat{m} } & = \condProb{\func{\phi_p}{m} = \func{\phi_p}{\hat{m}}}{ p \in \mathcal{P}, m\neq \hat{m} }\prob{p \in \mathcal{P}}
    \nonumber\\&
    + \condProb{\func{\phi_p}{m} = \func{\phi_p}{\hat{m}}}{ p \notin \mathcal{P}, m\neq \hat{m} }\prob{p \notin \mathcal{P}}\\
    &\leq \condProb{\func{\phi_p}{m} = \func{\phi_p}{\hat{m}}}{ p \in \mathcal{P}, m\neq \hat{m} } + \varepsilon\\
    &\leq \dfrac{\log M}{\func{\pi}{K} \log \log M} + \varepsilon .
\end{align}
\end{proof}

\begin{proof}
	Let \(m,\hat{m} \in \mathcal{M} = \set{1,2,\dots, M}\) be given such that \(m \neq \hat{m}\). Let \(\mathcal{A}\) be the randomized prime generator as described above. Suppose the probable primes \(k \gets \func{\mathcal{A}}{K,\varepsilon}\) and \(l \gets \func{\mathcal{A}}{K',\varepsilon}\) are generated with \(K = \ceil{\bracket{\log M}^{\alpha}}\) and \(K' =\ceil{\bracket{\log K}^{\alpha}}\). 
	The probability of the second kind error is given as follow 
	\begin{align}
		\func{P_{e,2}}{m,\hat{m}} &= \condProb{\func{\phi_l}{\func{\phi_k}{m}} = \func{\phi_l}{\func{\phi_k}{\hat{m}}} }{m \neq \hat{m}}\\
		 &=  \condProb{\func{\phi_l}{\func{\phi_k}{m}} = \func{\phi_l}{\func{\phi_k}{\hat{m}}} }{ \func{\phi_k}{m} = \func{\phi_k}{\hat{m}}, m \neq \hat{m}} \nonumber\\ 
         &\qquad \condProb{ \func{\phi_k}{m} = \func{\phi_k}{\hat{m}} }{m \neq \hat{m}}\nonumber\\
		 & \qquad + \condProb{\func{\phi_l}{\func{\phi_k}{m}} = \func{\phi_l}{\func{\phi_k}{\hat{m}}} }{ \func{\phi_k}{m} \neq \func{\phi_k}{\hat{m}}, m \neq \hat{m}} \nonumber\\
         &\qquad \condProb{ \func{\phi_k}{m} \neq  \func{\phi_k}{\hat{m}} }{m \neq \hat{m}}\\
		 &\leq \condProb{ \func{\phi_k}{m} = \func{\phi_k}{\hat{m}} }{m \neq \hat{m}} \nonumber\\
        & \qquad + \condProb{\func{\phi_l}{\func{\phi_k}{m}} = \func{\phi_l}{\func{\phi_k}{\hat{m}}} }{ \func{\phi_k}{m} \neq \func{\phi_k}{\hat{m}}} \\
		 &\leq \dfrac{\log M}{\func{\pi}{K} \log\log M} + \dfrac{\log K}{\func{\pi}{K'} \log\log K} + 2\varepsilon && \text{(Lemma~\ref{lmm:modified})} \nonumber \\
		 &\leq \dfrac{6\func{\log}{M} \func{\log}{K}}{ K \log\log M  }+ \dfrac{6\func{\log}{K} \func{\log}{K'}}{ K' \log\log K }  + 2\varepsilon && \text{(Theorem~\ref{Th.PNT})} \nonumber\\
		 &\approx \dfrac{6 \alpha \log \log M}{ \bracket{\log M}^{\alpha-1} \log\log M }+ \dfrac{6 \alpha \log \log K}{ \bracket{\log K}^{\alpha-1} \log\log K }   + 2\varepsilon\\
		 &= \dfrac{6 \alpha}{ \bracket{\log M}^{\alpha-1}}+ \dfrac{6 \alpha }{ \bracket{\log K}^{\alpha-1}} +2\varepsilon .
	\end{align}
Thereby, since \(\varepsilon\) can be chosen as small as possible, then \(\lambda_2 \to 0\) as \(M \to \infty\). 
This completes the proof of Theorem~\ref{thm:modified3step}.
\end{proof}

\section{Main Results -Part II - Generalized 3-Step Algorithm}
\label{Sec.3Step–ID–Generalized}
In this section, we introduce the 3-step algorithm in terms of universal hash functions. Moreover, we establish an equivalence between universal hash functions and identification codes. This equivalence allows us to design new identification systematically and then compare them all in one framework.

In the 3-step algorithm, we generate two random prime numbers and then compute a function of the message based on these two primes. In general, this is viewed as producing a local randomness \(j\) and computing a message specific tag function \(T_i\) with input \(j\), \(\func{T_i}{j}\). When working with a noiseless channel, the tagging function determines the second kind error probability. For example, the probability that the receiver identifies \(\hat{i}\) when \(i\) is sent is calculated as
\begin{equation}
    P_{e,2}(i,\hat{i}) = \prob{\func{T_i}{j} = \func{T_{\hat{i}}}{j}}.
\end{equation}
When \(j\) is chosen uniformly, this probability can be viewed as the probability of the collision of the hash function \(\func{h_j}{x} = \func{T_x}{j}\). The families of hash functions for which this probability of collision is bounded are called \textit{almost universal}.

Next, we introduce an identification scheme for noiseless channels that utilizes almost universal functions.

\begin{definition}\label{def:hash3step}
    Let \(\mathcal{M} = \set{1 , \dots,  M}\) and \(H = \set{h_{a} : X \to Y}_{a \in [I]}\) be a family of \(\varepsilon\)-almost universal hash functions indexed by the set \([I] = \set{0, 1 , \dots , I - 1}\). A round of communication round works as follows.
    \begin{enumerate}
        \item The sender chooses \(a \gets [I]\) uniformly. Then, he sends the code \((a, \func{h_a}{m})\) to the receiver.
        \item Upon receiving the code \((a, \func{h_a}{m})\), the receiver calculates \(\func{h_a}{\hat{m}}\) and identifies \(\hat{m}\) whenever \(\func{h_a}{m} = \func{h_a}{\hat{m}}\). 
    \end{enumerate}  
\end{definition}
The block length of the identification scheme in Definition~\ref{def:hash3step} is $n = \log I + \log \abs{Y}.$ The number of random bits required is $r = \log I.$ Then, the second kind error probability reads
\begin{equation}
    \condProb{\func{h_a}{m} = \func{h_a}{\hat{m}}}{m \neq \hat{m}} \leq \varepsilon .
\end{equation}
Thus, we have shown the following statement.
\begin{theorem}
    If there exists a class of \(\varepsilon\)-almost universal hash functions \(H = \set{h_a : \mathbb{F}_q^n \to \mathbb{F}_q}_{a \in [I]}\) then there exists a \((\log I + \log q, q^n,0, \varepsilon )\) identification code for noiseless binary channel.
\end{theorem}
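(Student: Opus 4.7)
The plan is to exhibit the construction from Definition~\ref{def:hash3step} and verify directly that it realizes all four parameters of the claimed code. I would take the message set to be $\mathcal{M} = \mathbb{F}_q^n$, which has cardinality $q^n$. For each $m \in \mathcal{M}$ the randomized encoder draws $a \gets [I]$ uniformly and transmits the pair $(a, h_a(m))$; since $a$ requires $\log I$ bits and $h_a(m) \in \mathbb{F}_q$ requires $\log q$ bits, the block length is $\log I + \log q$. For every $j \in \mathcal{M}$ I would declare the decoding set
\begin{equation}
    \D_j \;=\; \bigl\{\, (a, y) \in [I] \times \mathbb{F}_q \;:\; h_a(j) = y \,\bigr\}.
\end{equation}

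The two error conditions then fall out of the definitions. The transmitted pair $(a, h_a(i))$ always lies in $\D_i$ because $h_a(i) = h_a(i)$, so over the noiseless channel $P_{e,1}(i) = 0$ for every message $i$. For distinct messages $i \neq j$, the decoder declares $j$ upon receiving $(a, h_a(i))$ exactly when $h_a(j) = h_a(i)$, and since $a$ is uniform over $[I]$ the $\varepsilon$-almost universal property of $H$ applied to the distinct inputs $i, j$ yields
\begin{equation}
    P_{e,2}(i,j) \;=\; \Pr_{a \gets [I]}\!\bigl[\, h_a(i) = h_a(j) \,\bigr] \;\leq\; \varepsilon,
\end{equation}
uniformly in the pair $(i,j)$.

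Combining these two bounds shows that the construction is a valid $(\log I + \log q,\, q^n,\, 0,\, \varepsilon)$ identification code, finishing the proof. There is essentially no substantive obstacle here: the entire argument is a transcription of the hashing definition into the identification-code framework of Definition~\ref{Def.BSC-DI-Code}, with uniform hash-index selection playing the role of the local randomness used by the encoder. The only minor point worth flagging is that $\log I$ and $\log q$ should be read with ceilings so that each coordinate of the transmitted pair admits a binary representation, but this perturbs the block length only by an additive constant and has no effect on the achievable rate.
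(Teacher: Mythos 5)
Your proposal is correct and follows essentially the same route as the paper, which likewise instantiates the scheme of Definition~\ref{def:hash3step}, reads off the block length $\log I + \log q$ from the transmitted pair $(a, h_a(m))$, and bounds $P_{e,2}$ directly by the $\varepsilon$-almost universal collision property under uniform choice of $a$. The only difference is that you spell out the decoding sets and the $P_{e,1}=0$ verification explicitly, which the paper leaves implicit.
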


\subsection{Formulation of 3-step algorithm in terms of universal hash function}

 We now give an explicit construction of the Definition~\ref{def:hash3step} according to the Section~\ref{Sec.3Step–ID–Modified}. Let \(\func{A}{n} = \set{1,2, \dots, 2^n }\), \(\func{B}{n} = \set{1,2, \dots ,n^{\alpha} }\) for some \(\alpha > 1\), \(K = \set{2, 3, \dots, p_{\func{\pi}{n^{\alpha}}}}\), and define hash family \(\func{H}{n} = \set{h_k : A \to B \middle| k \in K}\) given as follows.
 \begin{equation}
    \func{h_p}{a} = \squareBracket{a \mod p} + 1 .
 \end{equation}
 \begin{lemma}\label{lmm:hashuniv}
    The class \(\func{H}{n}\) described above is \(\frac{\alpha}{n^{\alpha - 1}}\)-almost universal.
 \end{lemma}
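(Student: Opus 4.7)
The plan is to reduce the collision condition to a divisibility statement, count offending primes using the asymptotic bound on $\omega$, and compare against $\pi(n^{\alpha})$ via the PNT.

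First, I would observe that for distinct $a_1, a_2 \in A(n)$, the collision event $h_p(a_1) = h_p(a_2)$ is equivalent to $a_1 \equiv a_2 \pmod{p}$, i.e.\ $p \mid (a_1 - a_2)$. Hence the collision set
\[
    \{p \in K : h_p(a_1) = h_p(a_2)\}
\]
is exactly the set of primes $p \leq n^{\alpha}$ that divide the nonzero integer $d := |a_1 - a_2|$. In particular its cardinality is at most $\omega(d)$, the number of distinct prime factors of $d$.

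Next I would bound $\omega(d)$. Since $a_1, a_2 \in \{1, \ldots, 2^n\}$, we have $d < 2^n$, so $\ln d < n \ln 2$. Applying Theorem~\ref{Th.omega-n-Order-UB} (Ramanujan), for any $\delta > 0$ and all sufficiently large $n$,
\[
    \omega(d) \leq \frac{\ln d}{\ln \ln d} + \delta \leq (1+o(1))\,\frac{n \ln 2}{\ln n}.
\]
On the other hand, $|K| = \pi(n^{\alpha})$, and by Theorem~\ref{Th.PNT},
\[
    \pi(n^{\alpha}) \sim \frac{n^{\alpha}}{\ln(n^{\alpha})} = \frac{n^{\alpha}}{\alpha \ln n}.
\]

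Combining these two estimates yields, for sufficiently large $n$,
\[
    \frac{|\{p \in K : h_p(a_1) = h_p(a_2)\}|}{|K|}
    \leq \frac{\omega(d)}{\pi(n^{\alpha})}
    \leq (1+o(1))\,\frac{n \ln 2 / \ln n}{n^{\alpha}/(\alpha \ln n)}
    = (1+o(1))\,\frac{\alpha \ln 2}{n^{\alpha-1}}
    \leq \frac{\alpha}{n^{\alpha-1}},
\]
establishing the claimed $\tfrac{\alpha}{n^{\alpha-1}}$-almost universality.

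The only delicate step is pinning down the asymptotics: one must verify that the combined $o(1)$ terms from Ramanujan's bound and from the PNT are absorbed into the constant $\ln 2 < 1$, so that the stated bound $\alpha/n^{\alpha-1}$ (without extra logarithmic factors) holds for all $n$ large enough. Everything else is essentially bookkeeping: the reduction to divisibility is immediate, and both $\omega$ and $\pi$ are controlled by results already cited.
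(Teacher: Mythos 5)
Your proposal is correct and follows essentially the same route as the paper's (very terse) proof: reduce the collision event $h_p(a_1)=h_p(a_2)$ to $p \mid |a_1-a_2|$, bound the number of offending primes by $\omega(|a_1-a_2|)$ via Theorem~\ref{Th.omega-n-Order-UB}, and normalize by $\pi(n^{\alpha})$ using the PNT. Your version merely makes explicit the asymptotic bookkeeping (in particular that the factor $\ln 2 < 1$ absorbs the $o(1)$ terms) that the paper hides behind an ``$\approx$''.
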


 \begin{proof}
    Suppose \(x , y \in A\) are distinct. 
    \begin{align}
        \prob{\func{h_p}{x} = \func{h_p}{y}} &= \dfrac{1}{\func{\pi}{n^{\alpha}}} \abs{\set{p \in K \middle| \ p \mid  \abs{x-y}}} \leq \dfrac{\func{\omega}{x - y}}{\func{\pi}{n^{\alpha}}} \approx \dfrac{\alpha}{n^{\alpha - 1}} .
    \end{align}
 \end{proof}
Note that, \(H\) digests input exponentially, therefore, if we use \(H\) twice we can achieve double exponential compression. Firstly, consider the following composition lemma.
 \begin{lemma}\label{lmm:hashcomp}
    Suppose \(H_1: A \to B\) and \(H_2: B \to C\) are \(\varepsilon_1\) and \(\varepsilon_2\)-almost universal, respectively. The hash family \(H = H_2 \circ H_1 : A \to C = \set{h_2 \circ h_1 \middle| h_1 \in H_1, h_2 \in H_2}\) is \(\varepsilon =( \varepsilon_1 + \varepsilon_2)\)-almost universal. It is shown that \(H\) is \((\varepsilon_1 + \varepsilon_2 - \varepsilon_1\varepsilon_2)\)-almost universal \cite{bierbrauer}.
 \end{lemma}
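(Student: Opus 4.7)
The plan is to bound the collision probability of a composed hash by conditioning on whether the inner hash already collides, and then to apply a union bound. Let $x_1, x_2 \in A$ be distinct, and sample $h_1 \gets H_1$ and $h_2 \gets H_2$ independently and uniformly. Write $E$ for the event $(h_2 \circ h_1)(x_1) = (h_2 \circ h_1)(x_2)$, and $F$ for the inner collision event $h_1(x_1) = h_1(x_2)$.

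Split $E$ as the disjoint union $E \cap F$ and $E \cap F^c$. For the first piece, $E \cap F \subseteq F$, so $\Pr(E \cap F) \leq \Pr(F) \leq \varepsilon_1$ by the $\varepsilon_1$-almost universality of $H_1$. For the second piece, on $F^c$ the points $h_1(x_1)$ and $h_1(x_2)$ are distinct elements of $B$. The key step is the independence of $h_1$ and $h_2$: conditioning on any realization of $h_1$ for which $F^c$ holds, we have two fixed distinct inputs to $h_2$, so the $\varepsilon_2$-almost universality of $H_2$ yields $\Pr(E \mid h_1) \leq \varepsilon_2$. Averaging over $h_1$ restricted to $F^c$ gives $\Pr(E \cap F^c) \leq \varepsilon_2 \Pr(F^c) \leq \varepsilon_2$.

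Combining the two bounds,
\begin{equation}
\Pr(E) = \Pr(E \cap F) + \Pr(E \cap F^c) \leq \varepsilon_1 + \varepsilon_2,
\end{equation}
which establishes that $H = H_2 \circ H_1$ is $(\varepsilon_1 + \varepsilon_2)$-almost universal. The sharper bound $\varepsilon_1 + \varepsilon_2 - \varepsilon_1 \varepsilon_2$ can be obtained by bounding $\Pr(E \cap F^c)$ by $\varepsilon_2 (1 - \Pr(F))$ and optimizing, or via the inclusion-exclusion-style argument given in \cite{bierbrauer}, which is cited rather than reproduced.

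The only subtle point, and therefore the main obstacle to flag, is the conditioning step: one must ensure that $h_2$ is drawn independently of $h_1$ and that, on the event $F^c$, the two images under $h_1$ form a valid pair of \emph{distinct} inputs to which the definition of $\varepsilon_2$-almost universality can be applied. Once this is clear, the remainder is a routine union bound.
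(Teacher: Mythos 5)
Your proof is correct and follows essentially the same route as the paper's: decompose the collision event according to whether the inner hash $h_1$ already collides, bound the first piece by $\varepsilon_1$ and the second by $\varepsilon_2$ via a union-bound/conditioning argument. Your version is simply more explicit about the independence of $h_1$ and $h_2$ and about why the distinctness of $h_1(x_1)$ and $h_1(x_2)$ on the non-collision event licenses the application of $\varepsilon_2$-almost universality, which the paper leaves implicit.
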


\begin{proof}
    Let \(x, y \in A\) be two distinct elements.
    \begin{align}
        & \prob{\func{h_2}{\func{h_1}{x}} = \func{h_2}{\func{h_1}{y}}} 
        \nonumber\\&
        \leq \condProb{\func{h_2}{\func{h_1}{x}} = \func{h_2}{\func{h_1}{y}}}{\func{h_1}{x} \neq \func{h_1}{y}} +  \prob{\func{h_1}{x} = \func{h_1}{y}}
        \nonumber\\& \leq \varepsilon_2 + \varepsilon_1 .
    \end{align}
\end{proof}
By Lemma~\ref{lmm:hashuniv} and Lemma~\ref{lmm:hashcomp}, the hash family \(\func{H^2}{n} = \func{H}{\alpha n} \circ \func{H}{2^n}\) is $\varepsilon$-almost universal where $$\varepsilon = \frac{\alpha}{(\alpha n)^{\alpha - 1}} + \frac{\alpha}{2^{n \bracket{\alpha - 1}}} = \mathcal{O}(n^{1 - \alpha})$$ and maps \(A = \set{0, 1, \dots, 2^{2^n}}\) to \(C = \set{1, \dots, (\alpha n)^{\alpha}}\).
\subsection{Current Directions}
One way to construct identification codes is by re-purposing error correcting codes \cite{Verdu93}. The methods described in \cite{Verdu93} use the error correction codes to construct a tag function. The work \cite{bierbrauer} has shown the equivalence between error correcting codes and almost universal hash function. 

\begin{theorem}\label{thm:hasheqid}
    If there is a \([n,k,d]_q\) code, then there exists a \((1- \frac{d}{n})\)-almost universal hash family \(H = \set{h_a : \mathbb{F}_q^k \to \mathbb{F}_q}_{a \in [n]}\) and conversely, if there is a \(\varepsilon\)-almost universal  \(H = \set{h_a :  \mathbb{F}_q^k \to \mathbb{F}_q}_{a \in [n]}\) then there exists a \([n, k,n(1- \varepsilon)]_q\) code.
\end{theorem}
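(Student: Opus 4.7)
The plan is to establish both implications by exhibiting an explicit correspondence between codeword coordinates and hash evaluations, and translating collision probability bounds into minimum distance bounds via Hamming weight. Throughout I restrict attention to linear hash families, consistent with the linearity implicit in the $[n,k,d]_q$ code notation.

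For the forward direction, assume $\mathcal{C}$ is a $[n,k,d]_q$ linear code with generator matrix $G \in \mathbb{F}_q^{k \times n}$. I define the family $H = \{h_a\}_{a=1}^{n}$ by setting $h_a(x) = (xG)_a$, the $a$-th coordinate of the codeword encoded from the information vector $x \in \mathbb{F}_q^k$. For any distinct $x, y \in \mathbb{F}_q^k$, a collision $h_a(x) = h_a(y)$ is equivalent to $a$ being a zero-coordinate of the codeword $(x-y)G$. Since $(x-y)G$ is a nonzero codeword, its Hamming weight is at least $d$, so its zero-coordinate set has cardinality at most $n - d$. Hence, when $a$ is drawn uniformly from $\{1,\ldots,n\}$, the collision probability is at most $(n-d)/n = 1 - d/n$, proving that $H$ is $(1 - d/n)$-almost universal.

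For the reverse direction, suppose $H = \{h_a : \mathbb{F}_q^k \to \mathbb{F}_q\}_{a=1}^{n}$ is an $\varepsilon$-almost universal linear hash family. Define the encoder $E : \mathbb{F}_q^k \to \mathbb{F}_q^n$ by $E(x) = (h_1(x), \ldots, h_n(x))$. Linearity of each $h_a$ makes $E$ linear, so its image $\mathcal{C}$ is a linear subspace of $\mathbb{F}_q^n$. For distinct $x, y$, the Hamming distance $d_H(E(x), E(y))$ equals $n$ minus the collision count, which by almost universality is at least $n - \varepsilon n = n(1 - \varepsilon)$. Provided $\varepsilon < 1$, at least one non-colliding index exists, forcing $E$ to be injective and thus $\dim \mathcal{C} = k$. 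Combining these facts shows that $\mathcal{C}$ is a $[n, k, n(1-\varepsilon)]_q$ code.

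The main subtlety to navigate is the implicit linearity assumption on $H$ in the reverse direction: without it, the construction yields only an $(n, q^k, n(1 - \varepsilon))_q$ combinatorial code, not necessarily linear. I would resolve this either by explicitly restricting the statement to linear hash families (the standard convention in \cite{bierbrauer}) or by broadening the interpretation of the $[n,k,d]_q$ notation to include nonlinear codes. Modulo this point, the argument is essentially a bookkeeping translation in which the generator matrix $G$ and the matrix of hash evaluations play dual roles, and collision count, minimum distance, and codeword weight all encode the same combinatorial datum.
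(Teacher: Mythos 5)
Your proof is correct and is the standard argument: the paper itself offers no proof of this theorem, deferring entirely to the cited reference \cite{bierbrauer}, and your generator-matrix/coordinate-evaluation correspondence (collision set $=$ zero-coordinate set of the difference codeword, so collision count $\le n-d$, and conversely agreement count $\le \varepsilon n$ gives distance $\ge n(1-\varepsilon)$) is exactly the canonical equivalence. Your remark on the linearity caveat in the converse direction is a fair observation; the paper's $[n,k,d]_q$ notation does implicitly assume linear codes, and the standard resolution is the one you give (restrict to linear hash families, as in \cite{bierbrauer}, or read the converse as producing an $(n,q^k,n(1-\varepsilon))_q$ code).
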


\section{Performance Analysis and Simulation Experiments}
\label{Sec.Simulations}
In this section, we discuss on the complexity of our algorithm and some existing trade-offs between the performance parameters. Finally, we introduce our simulations experiments along and describe the relevant analysis.

\subsection{Theoretical Complexity}
As oppose to the original 3-step algorithm of Ahlswede, in our revision, most of the work is done by the transmitter. Specifically, the transmitter runs the prime number generator \(\mathcal{A}\) twice. Suppose the simple Algorithm~\ref{alg:GMR} is used with Miller-Rabin primality test. By the analysis given in the Section~\ref{subsec:unifprime}, this algorithm on average terminates in $\mathcal{O}((\log n)^4)$ where \(n\) is its input. Therefore, on input \(K\), the algorithm runs in 

\begin{equation}
    \mathcal{O}((\log K)^4) = \mathcal{O}(\alpha \log \log M )^4) = \mathcal{O}(n^4) ,
\end{equation}

where \(n\) is the block length. On input \(K'\), it runs in 

\begin{equation}
     \mathcal{O}((\log K')^4) = \mathcal{O}(\alpha \log \log K )^4) = \mathcal{O}(\log n^4) .
\end{equation}

All together, we have significantly improved time complexity of the scheme without affecting its error probability or block length. We should mention that this is the time complexity to generate prime numbers. To compute the actual code, we need to divide a \(2^n\)-bit message by an \(n\)-bit prime. However, this is common in all identification schemes as we are required to process exponentially larges messages. 

Note that, we can fix the encoding procedure such that the probable prime keys \(k\) and \(l\) are padded with zero so they are sent with exactly \(\ceil{\log K}\) and \(\ceil{\log K'}\) bits respectively. This alleviates the problem of separating the keys. Moreover, since the size of the keys can be easily computed from the scheme parameters, \(M\) and \(\alpha\), this technique does not require additional complicated circuit to implement. In conclusion, the transmitter requires a circuit to generate primes and another circuit to compute the divisions. Meanwhile, the receiver only requires the division circuit.

\subsection{Trade-offs}
The complexity of this scheme for a given number of messages is tuned with the parameter \(\alpha\). Increasing \(\alpha\), will improve the second kind error probability significantly but comes at expense of  block length and time. However as evident from Equations~\cref{eq:tradeoffs1,eq:tradeoffs2,eq:tradeoffs3}, the effect of changing \(\alpha\) is different for block length, error probability, and time.
\begin{align}
    n &\approx \alpha \log \log M \label{eq:tradeoffs1}\\
    P_{2,e} &\approx \dfrac{6 \alpha}{(\alpha \log \log M)^{\alpha - 1}}\label{eq:tradeoffs2}\\
    \text{Time} &= \bigO{(\alpha \log \log M)^4} .
    \label{eq:tradeoffs3}
\end{align}

Thus, setting the value of \(\alpha\) depends on the resources available and the required performance.

\subsection{Numerical Experiments}
We have implemented a code that mainly runs the Algorithm~\ref{Alg.UPM-1} to generate two probable prime numbers. Then it calculates the block length, error probability, the probability of collision, and the running time of the simulation. The time in simulations is measured in milliseconds. The parameters \(s,k\) are set according to Equation~\eqref{eq:skparameters} with \(l = q = 10\). The main parameters are \(\alpha\), \(M\), and \(R\) the number of rounds that we repeat the simulation. We have selected \(\log \log M\) as the independent value that runs from \(100\) to \(2000\) by increments of \(100\). We then plotted the block length, second kind error probability, the probability of collision, and time against \(\log \log M\) for different values of \(\alpha\) and \(R\). 

We define the probability of collision for a pair of prime keys \(k,l\) as 
\begin{equation}
    P_{coll,k,l} = \dfrac{\abs{\set{ \set{x,y} \in \mathcal{M} \times \mathcal{M} \middle| \phi_l(\phi_k(x)) = \phi_l(\phi_k(y)), x \neq y }}}{M^2} .
\end{equation}
That is, the probability that two message have the same identifier. Since, \(l\) is usually smaller than \(k\), then we can approximate the probability collision as follows.
\begin{equation}
    P_{coll,k,l} \approx \dfrac{\abs{\set{ \set{x,y} \in \mathcal{M} \times \mathcal{M} \middle| \phi_l(x) = \phi_l(y), x \neq y }}}{M^2} \approx \dfrac{1}{l} \approx \bigO{(\alpha \log \log M)^{-\alpha}} .
\end{equation}
By trial and error, we have found that $12 (\alpha \log \log M)^{-\alpha}$ matches the curve the best. We shall note that the probability of collision and the second kind error are different entities. Observe that $\mathrm{Time} \approx 0.018 (\alpha \log \log M)^4$ matches the obtained curves.

\textbf{Discussions:}
As our figures suggest, increasing \(\alpha\) causes the second kind error probability to decrease much more rapidly in expense of a larger block length. The parameter \(R\) determines the number of times that we run the simulation. Increasing \(R\) will cause the the average probability of collision to be more accurate. Similarly, decreasing \(R\) will causes the average  probability of collision to deviate from its correct average. This explains the random spikes and dips in the obtained figures.

\begin{figure}[H]
    \centering
    \setkeys{Gin}{width=0.5\textwidth}
    \subfloat[\label{fig:subfig-a}]{\includegraphics{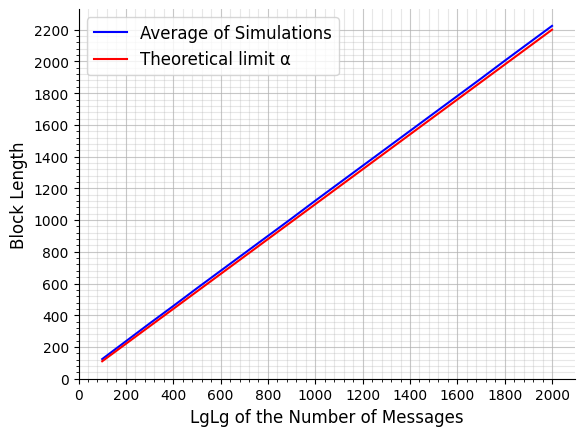}}
    \hfill
    \subfloat[\label{fig:subfig-b}]{\includegraphics{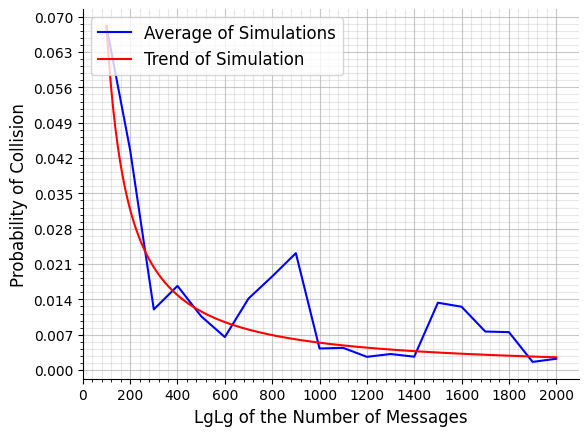}}
    \hfill
    \subfloat[\label{fig:subfig-c}]{\includegraphics{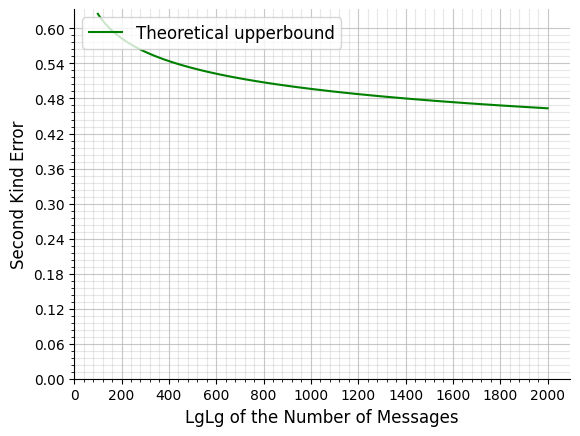}}
    \hfill
    \subfloat[\label{fig:subfig-d}]{\includegraphics{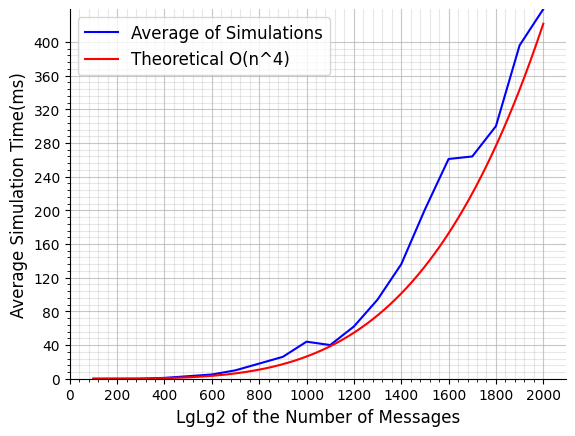}}
    \caption{Theoretical upper bound and empirical type II error probability corresponding to parameters \(\alpha = 1.1\) and \(R = 100\)}
\end{figure}
\begin{figure}[H]
    \centering
    \setkeys{Gin}{width=0.5\textwidth}
    \subfloat[\label{fig:subfig-e}]{\includegraphics{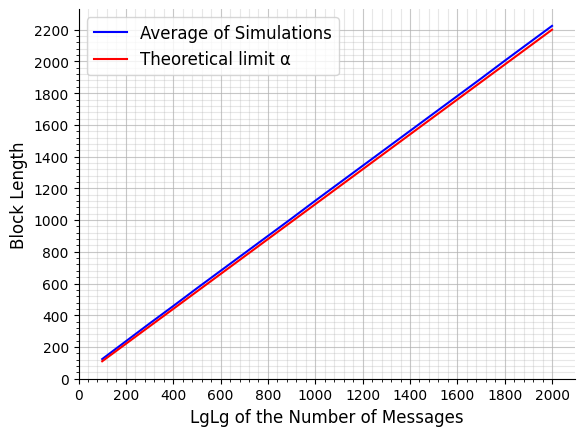}}
    \hfill
    \subfloat[\label{fig:subfig-f}]{\includegraphics{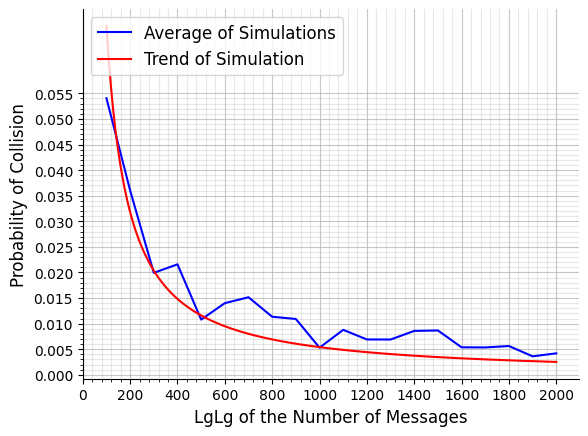}}
    \hfill
    \subfloat[\label{fig:subfig-g}]{\includegraphics{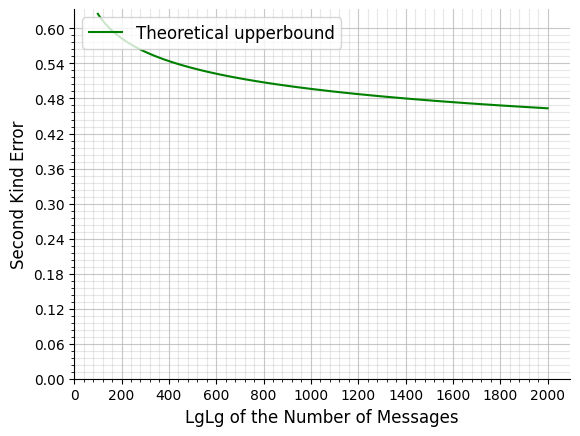}}
    \hfill
    \subfloat[\label{fig:subfig-h}]{\includegraphics{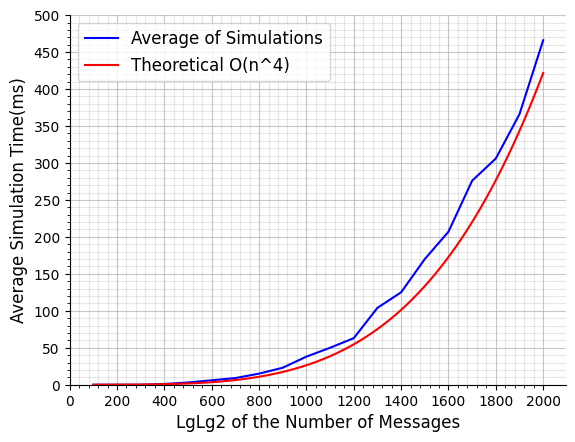}}
    \caption{Theoretical upper bound and empirical type II error probability corresponding to parameters \(\alpha = 1.1\) and \(R = 500\)}
\end{figure}

\begin{figure}[H]
    \centering
    \setkeys{Gin}{width=0.5\textwidth}
    \subfloat[\label{fig:subfig-i}]{\includegraphics{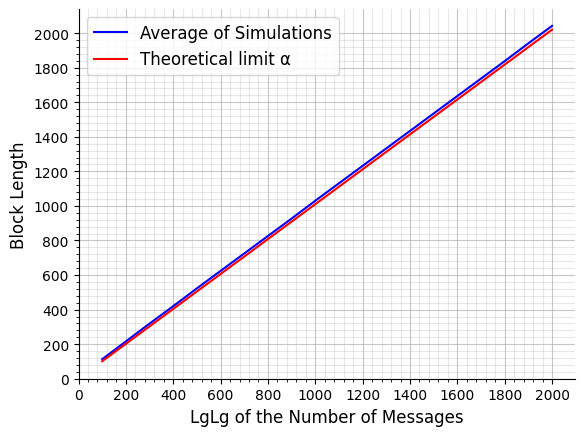}}
    \hfill
    \subfloat[\label{fig:subfig-j}]{\includegraphics{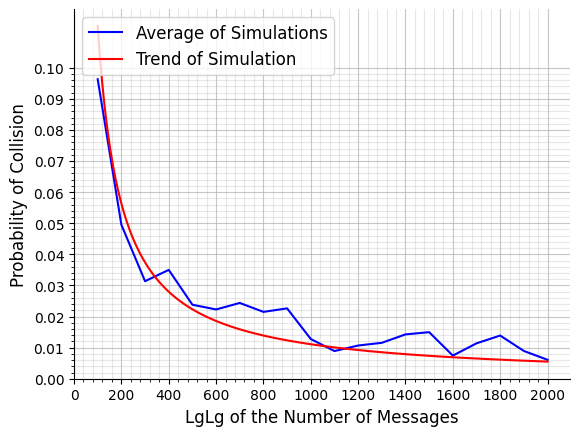}}
    \hfill
    \subfloat[\label{fig:subfig-k}]{\includegraphics{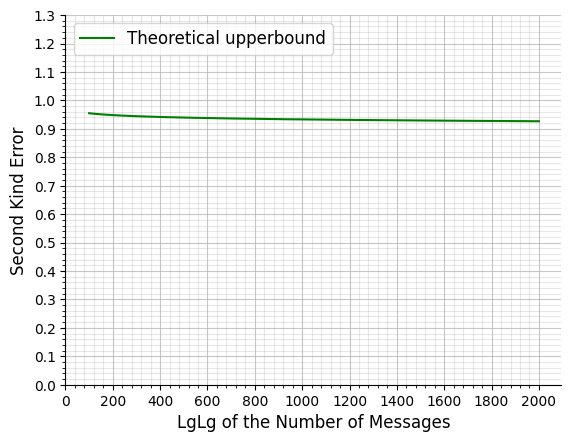}}
    \hfill
    \subfloat[\label{fig:subfig-l}]{\includegraphics{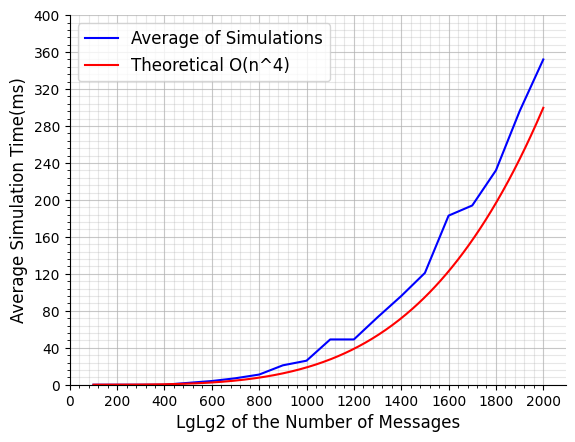}}
    \caption{Theoretical upper bound and empirical type II error probability corresponding to parameters \(\alpha = 1.01\) and \(R = 500\)}
\end{figure}

\begin{figure}[H]
    \centering
    \setkeys{Gin}{width=0.5\textwidth}
    \subfloat[\label{fig:subfig-m}]{\includegraphics{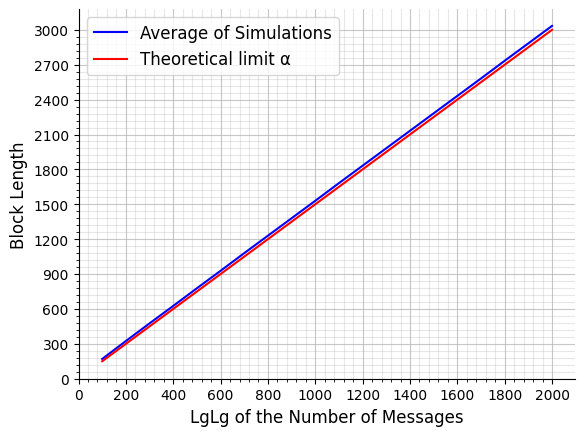}}
    \hfill
    \subfloat[\label{fig:subfig-n}]{\includegraphics{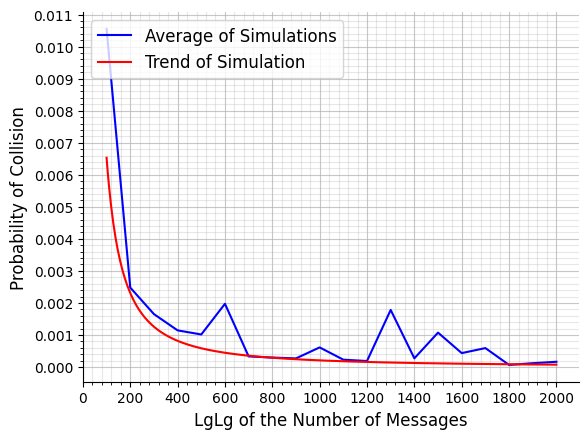}}
    \hfill
    \subfloat[\label{fig:subfig-o}]{\includegraphics{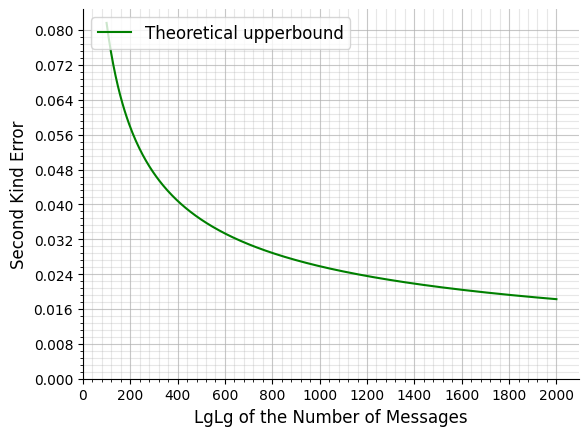}}
    \hfill
    \subfloat[\label{fig:subfig-p}]{\includegraphics{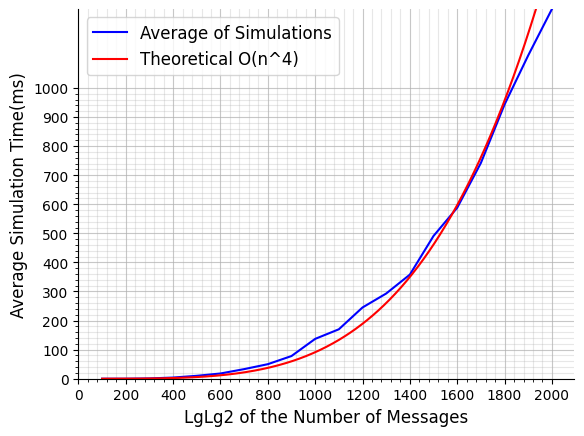}}
    \caption{Theoretical upper bound and empirical type II error probability corresponding to parameters \(\alpha = 1.5\) and \(R = 500\)}
\end{figure}

\section{Conclusions and Future Directions}
\label{Sec.Conclusions}
In this report, we optimized the 3-step algorithm in three ways. Firstly, by using pseudo prime generation algorithms such as Miller-Rabin's we have improved the running time of the key generation from exponential time to polynomial time in block length. Secondly, we have modified the coding scheme to send the generated prime keys instead of their index. Although this modification increases the block length, it does not affect the asymptotic rate. As a result, cost for the system design of the transmitter and receiver is simplified without affecting the optimality of the scheme. Finally, we used results from number theory to obtain better upper bounds on the type II error probability.

This work can be extended in several directions. The obvious way is to optimize some aspects of the method. For example, one might provide a new pseudo prime generation algorithm that requires less random bit or runs faster. A more interesting way is to investigates different numbers instead of primes. For instance, we can use semi-prime numbers, numbers with exactly two prime factors, as keys. This will increase the number of available keys for a given block length which will lead to overall shorter block length, however, its impact on the type II error probability is not known, yet. On the other hand, there exist other primality test in the literature which has a deterministic nature such as AKS \cite{Agrawal04} and can be examined as a possible extension of our work. Unlike the Miller Rabin test which generate only a probabilistic result, this algorithm can distinguish deterministically whether or not a target general number is prime or composite.

\section*{}
\bibliographystyle{IEEEtran}
\bibliography{Lit}

\end{document}